\documentclass[journal]{IEEEtran}
\usepackage{amsmath}
\usepackage{amssymb}
\usepackage{amsfonts}
\usepackage{graphicx}
\usepackage{epsfig}
\usepackage{subfigure}
\usepackage{psfrag}
\hyphenation{op-tical net-works semi-conduc-tor}
\usepackage{color}

\begin{document}

\title{On Ergodic Sum Capacity of Fading Cognitive Multiple-Access and Broadcast Channels}

\author{Rui Zhang,~\IEEEmembership{Member,~IEEE,} Shuguang Cui,~\IEEEmembership{Member,~IEEE,}
and Ying-Chang Liang,~\IEEEmembership{Senior Member,~IEEE} 
\thanks{Manuscript received June 27, 2008; revised
April 13, 2009. This paper has been presented in part at Annual
Allerton Conference on Communication, Control and Computing,
Monticello, IL, USA, September 23-26, 2008.}
\thanks{R. Zhang and Y.-C. Liang are with the Institute for Infocomm Research, A*STAR, Singapore. (e-mails:
\{rzhang, ycliang\}@i2r.a-star.edu.sg)}
\thanks{S. Cui is with the Department of Electrical and Computer
Engineering, Texas A\&M University, Texas, USA. (e-mail:
cui@ece.tamu.edu)}}

\maketitle

\begin{abstract}
This paper studies the information-theoretic limits of a secondary
or cognitive radio (CR) network under spectrum sharing with an
existing primary radio network. In particular, the fading cognitive
multiple-access channel (C-MAC) is first studied, where multiple
secondary users transmit to the secondary base station (BS) under
both individual transmit-power constraints and a set of
interference-power constraints each applied at one of the primary
receivers. This paper considers the long-term (LT) or the short-term
(ST) transmit-power constraint over the fading states at each
secondary transmitter, combined with the LT or ST interference-power
constraint at each primary receiver. In each case, the optimal power
allocation scheme is derived for the secondary users to achieve the
ergodic sum capacity of the fading C-MAC, as well as the conditions
for the optimality of the dynamic time-division-multiple-access
(D-TDMA) scheme in the secondary network. The fading cognitive
broadcast channel (C-BC) that models the downlink transmission in
the secondary network is then studied under the LT/ST transmit-power
constraint at the secondary BS jointly with the LT/ST
interference-power constraint at each of the primary receivers. It
is shown that D-TDMA is indeed optimal for achieving the ergodic sum
capacity of the fading C-BC for all combinations of transmit-power
and interference-power constraints.
\end{abstract}

\begin{keywords}
Broadcast channel, cognitive radio, convex optimization, dynamic
resource allocation, ergodic capacity, fading channel, interference
temperature, multiple-access channel, spectrum sharing,
time-division-multiple-access.
\end{keywords}

\IEEEpeerreviewmaketitle

\setlength{\baselineskip}{1.0\baselineskip}
\newtheorem{claim}{Claim}
\newtheorem{guess}{Conjecture}
\newtheorem{definition}{Definition}
\newtheorem{fact}{Fact}
\newtheorem{assumption}{Assumption}
\newtheorem{theorem}{\underline{Theorem}}[section]
\newtheorem{lemma}{\underline{Lemma}}[section]
\newtheorem{ctheorem}{Corrected Theorem}
\newtheorem{corollary}{\underline{Corollary}}[section]
\newtheorem{proposition}{Proposition}
\newtheorem{example}{\underline{Example}}[section]
\newtheorem{remark}{\underline{Remark}}[section]
\newtheorem{problem}{\underline{Problem}}[section]
\newtheorem{algorithm}{\underline{Algorithm}}[section]
\newcommand{\mv}[1]{\mbox{\boldmath{$ #1 $}}}

\section{Introduction}

\PARstart{C}ognitive radio (CR), since the name was coined by Mitola
in his seminal work \cite{Mitola00}, has drawn intensive attentions
from both academic (see, e.g., \cite{Goldsmith08} and references
therein) and industrial (see, e.g., \cite{802.22} and references
therein) communities; and to date, many interesting and important
results have been obtained. In CR networks, the secondary users or
CRs usually communicate over the same bandwidth originally allocated
to an existing primary radio network. In such a scenario, the CR
transmitters usually need to deal with a fundamental tradeoff
between maximizing the secondary network throughput and minimizing
the resulted performance degradation of the active primary
transmissions. One commonly known technique used by the secondary
users to protect the primary transmissions is {\it opportunistic
spectrum access} (OSA), originally outlined in \cite{Mitola00} and
later introduced by DARPA, whereby the secondary user decides to
transmit over a particular channel only when all primary
transmissions are detected to be off. For OSA, an enabling
technology is to detect the primary transmission on/off status, also
known as {\it spectrum sensing}, for which many algorithms have been
reported in the literature (see, e.g., \cite{YeLi08} and references
therein). However, in practical situations with a nonzero
misdetection probability for an active primary transmission, it is
usually impossible to completely avoid the performance degradation
of the primary transmission with the secondary user OSA.

Another approach different from OSA for a CR to maximize its
throughput and yet to provide sufficient protection to the primary
transmission is allowing the CR to access the channel even when the
primary transmissions are active, provided that the resultant
interference power, or the so-called {\it interference temperature}
(IT) \cite{Haykin05}, \cite{Gastpar07}, at each primary receiver is
limited below a predefined value. This spectrum sharing strategy is
also referred to as Spectrum Underlay \cite{Goldsmith08},
\cite{Zhao07} or Horizontal Spectrum Sharing \cite{Haykin05},
\cite{Tarokh06}. With this strategy, {\it dynamic resource
allocation} (DRA) becomes essential, whereby the transmit powers,
bit-rates, bandwidths, and antenna beams of the secondary
transmitters are dynamically allocated based upon the channel state
information (CSI) in the primary and secondary networks. A number of
papers have recently addressed the design of optimal DRA schemes to
achieve the point-to-point CR channel capacity under the IT
constraints at the primary receivers (see, e.g.,
\cite{Ghasemi07}-\cite{Zhang08d}). On the other hand, since the CR
network is in nature a multiuser communication environment, it will
be more relevant to consider DRA among multiple secondary users in a
CR network rather than that for the case of one point-to-point CR
channel. Deploying the interference-temperature constraint as a
practical means to protect the primary transmissions, the
conventional network models such as the multiple-access channel
(MAC), broadcast channel (BC), interference channel (IC), and relay
channel (RC) can all be considered for the secondary network,
resulting in various new cognitive network models and associated
problem formulations for DRA (see, e.g.,
\cite{ZhangLan08a}-\cite{Vu07}). It is also noted that there has
been study in the literature on the information-theoretic limits of
the CR channels by exploiting other types of ``cognitions''
available at the CR terminals different from the IT, such as the
knowledge of the primary user transmit messages at the CR
transmitter \cite{Tarokh06}, \cite{Viswanath06}, the distributed
detection results on the primary transmission status at the CR
transmitter and receiver \cite{Syed07a}, the ``soft'' sensing
results on the primary transmission \cite{Syed07b}, and the primary
transmission on-off statistics \cite{Meng08}.

In this paper, we focus on the single-input single-output (SISO) or
single-antenna fading cognitive MAC (C-MAC) and cognitive BC (C-BC)
for the secondary network, where $K$ secondary users communicate
with the base station (BS) of the secondary network in the presence
of $M$ primary receivers. It is assumed that the BS has the perfect
CSI on the channels between the BS and all the secondary users, as
well as the channels from the BS and each secondary user to all the
primary receivers.\footnote{In practice, CSI on the channels between
the secondary users and their BS can be obtained by the classic
channel training, estimation, and feedback mechanisms, while CSI on
the channels between the secondary BS/users and the primary
receivers can be obtained by the secondary BS/users via, e.g.,
estimating the received signal power from each primary terminal when
it transmits, under the assumptions of pre-knowledge on the primary
transmit power levels and the channel reciprocity.} Thereby, the BS
can implement a centralized dynamic power and rate allocation scheme
in the secondary network so as to optimize its performance and yet
maintain the interference power levels at all the primary receivers
below the prescribed thresholds. An information-theoretic approach
is taken in this paper to characterize the maximum sum-rate of
secondary users averaged over the channel fading states, termed as
{\it ergodic sum capacity}, for both the fading C-MAC and C-BC. The
ergodic sum capacity can be a relevant measure for the maximum
achievable throughput of the secondary network when the data traffic
has a sufficiently-large delay tolerance. As usual (see, e.g.,
\cite{Caire99}), we consider both the long-term (LT) transmit-power
constraint (TPC) that regulates the {\it average} transmit power
across all the fading states at the BS or each of the secondary
user, as well as the short-term (ST) TPC that is more restrictive
than the LT-TPC by limiting the {\it instantaneous} transmit power
at each fading state to be below a certain threshold. Similarly, we
also consider both the LT interference-power constraint (IPC) that
regulates the resultant average interference power over fading at
each primary receiver, and the ST-IPC that imposes a more strict
instantaneous limit on the resultant interference power at each
fading state. The major problem to be addressed in this paper is
then to characterize the ergodic sum capacity of the secondary
network under different combinations of LT-/ST-TPC and LT-/ST-IPC.
Apparently, such a problem setup is unique for the fading CR
networks. Moreover, we are interested in investigating the
conditions over each case for the optimality of the dynamic
time-division-multiple-access (D-TDMA) scheme in the secondary
network, i.e., when it is optimal to schedule a single secondary
user at each fading state for transmission to achieve the ergodic
sum capacity. These optimality conditions for D-TDMA are important
to know as when they are satisfied, the single-user decoding and
encoding at the secondary BS becomes optimal for the C-MAC and C-BC,
respectively. This can lead to a significant complexity reduction
compared with the cases where these conditions are not satisfied
such that the BS requires more complex multiuser decoding and
encoding techniques to achieve the ergodic sum capacity.

Information-theoretic studies can be found for the deterministic (no
fading) SISO-MAC and SISO-BC in, e.g., \cite{Coverbook}, and for the
fading (parallel) SISO-MAC and SISO-BC in, e.g.,
\cite{Verdu93}-\cite{Tse98b} and \cite{HH75}-\cite{Li01a},
respectively. In addition, D-TDMA has been shown as the optimal
transmission scheme to achieve the ergodic sum capacity of the
fading SISO-MAC under the LT-TPC at each transmitter \cite{Tse98a},
\cite{Knopp95}. Thanks to the duality result on the capacity regions
of the Gaussian MAC and BC \cite{Goldsmith04}, the optimality of
D-TDMA is also provable for the fading SISO-BC to achieve the
ergodic sum capacity. However, to our best knowledge,
characterizations of the ergodic sum capacities as well as the
optimality conditions for D-TDMA over the fading C-MAC and C-BC
under various mixed transmit-power and interference-power
constraints have not been addressed yet in the literature. In this
paper, we will provide the solutions to these problems. The main
results of this paper are summarized below for a brief overview:
\begin{itemize}
\item For the fading cognitive SISO-MAC, we show that D-TDMA is
optimal for achieving the ergodic sum capacity when the LT-TPC is
applied jointly with the LT-IPC. This result is an extension of that
obtained earlier in \cite{Knopp95} for the traditional fading
SISO-MAC without the LT-IPC. For the other three cases of mixed
power constraints, i.e., LT-TPC with ST-IPC, ST-TPC with LT-IPC, and
ST-TPC with ST-IPC, we show that although D-TDMA is in general a
suboptimal scheme and thus does not achieve the ergodic sum
capacity, it can be optimal under some special conditions. We
formally derive these conditions from the Karush-Kuhn-Tucker (KKT)
conditions \cite{Boydbook} associated with the capacity maximization
problems. In particular, for the case of LT-TPC with ST-IPC, we show
that the optimal number of secondary users that transmit at the same
time should be no greater than $M+1$. Therefore, for small values of
$M$, e.g., $M=1$ corresponding to a single primary receiver, D-TDMA
is close to being optimal. Furthermore, for all cases considered, we
derive the optimal transmit power-control policy for the secondary
users to achieve the ergodic sum capacity. For the two cases of
LT-TPC with LT-IPC and ST-TPC with LT-IPC, we provide the
closed-form solutions for the optimal power allocation at each
fading state. Particularly, in the case of ST-TPC with LT-IPC, we
show that for the active secondary users at one particular fading
state, there is at most one user that transmits with power lower
than its ST power constraint, while all the other active users
transmit with their maximum powers.

\item For the fading cognitive SISO-BC, we show that for all considered cases
of mixed power constraints, D-TDMA is optimal for achieving the
ergodic sum capacity. The optimal transmit power allocations at the
BS in these cases have closed-form solutions, which resemble the
single-user ``water-filling (WF)'' solutions for the well-known
fading (parallel) Gaussian channels \cite{Coverbook},
\cite{Goldsmith97}.
\end{itemize}

The rest of this paper is organized as follows. Section
\ref{sec:system model} provides the system model for the fading
C-MAC and C-BC. Section \ref{sec:MAC} and Section \ref{sec:BC} then
present the results on the ergodic sum-capacity, the associated
optimal power-control policy, and the optimality conditions for
D-TDMA, for the fading C-MAC and C-BC, respectively, under different
mixed LT/ST transmit-power and interference-power constraints.
Section \ref{sec:numerical results} provides the numerical results
on the ergodic sum capacities of the fading C-MAC and C-BC under
different mixed power constraints, the capacities with vs. without
the TDMA constraint, and those with vs. without the optimal power
control, and draws some insightful observations pertinent to the
optimal DRA in CR networks. Finally, Section \ref{sec:conclusions}
concludes this paper.

\section{System Model} \label{sec:system model}

\begin{figure}
\psfrag{a}{SU-1} \psfrag{b}{SU-2} \psfrag{c}{SU-K} \psfrag{d}{PR-1}
\psfrag{e}{PR-2} \psfrag{f}{PR-M} \psfrag{g}{BS} \psfrag{h}{$h_1$}
\psfrag{i}{$h_2$} \psfrag{j}{$h_K$}
\psfrag{k}{$g_{11}$}\psfrag{l}{$g_{12}$}\psfrag{m}{$g_{K2}$}\psfrag{n}{$g_{KM}$}
\begin{center}
\scalebox{0.9}{\includegraphics*[52pt,542pt][331pt,771pt]{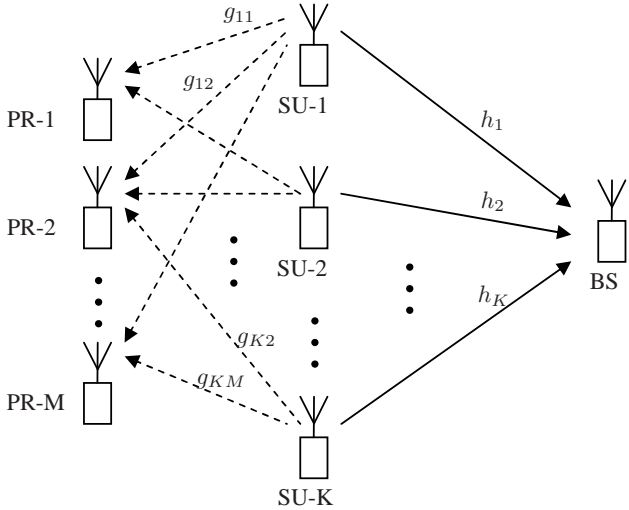}}
\end{center}\vspace{-0.2in}
\caption{The cognitive SISO-MAC where $K$ SUs transmit to the secondary BS while possibly interfering with each of $M$ PRs.} \label{fig:C-MAC
model}\vspace{-0.1in}
\end{figure}

Consider a fading C-MAC as shown in Fig. \ref{fig:C-MAC model},
where $K$ CRs or secondary users (SUs) transmit to the secondary BS
by sharing the same narrow band with $M$ primary receivers (PRs),
and all terminals are assumed to be equipped with a single antenna
each. A {\it block-fading} (BF) channel model is assumed for all the
channels involved. Furthermore, since this paper considers coherent
communications, only the fading channel power gains (amplitude
squares) are of interest. During each transmission block, the power
gain of the fading channel from the $k$-th SU to the secondary BS is
denoted by $h_k$, while that of the fading channel from the $k$-th
SU to the $m$-th PR is denoted by $g_{km}$, $k=1,\ldots,K,
m=1,\ldots,M$. These channel power gains are assumed to be drawn
from a vector random process, which we assume to be ergodic over
transmission blocks and have a continuous, differentiable joint
cumulative distribution function (cdf), denoted by $F(\mv{\alpha})$,
where $\mv{\alpha}\triangleq[h_1\cdots h_K, g_{11}\cdots g_{1M},
g_{21}\cdots g_{2M},\ldots, g_{K1}\cdots g_{KM}]$ denotes the power
gain vector for all the channels of interest. We further assume that
$h_k$'s and $g_{km}$'s are independent. In addition, it is assumed
that the additive noises (including any additional interferences
from the outside of the secondary network, e.g., the primary
transmitters) at the secondary BS are independent circular symmetric
complex Gaussian (CSCG) random variables, each having zero mean and
unit variance, denoted as $\mathcal{CN}(0,1)$. Since in this paper
we are interested in the information-theoretic limits of the C-MAC,
it is assumed that the optimal Gaussian codebook is used by each SU
transmitter.

It is assumed that the secondary BS knows {\it a priori} the channel
distribution information $F(\mv{\alpha})$ and furthermore the
channel realization $\mv{\alpha}$ at each transmission block.
Thereby, the secondary BS is able to schedule transmissions of SUs
and allocate their transmit power levels and rate values at each
transmission block, so as to optimize the performance of the
secondary network and yet provide a necessary protection to each of
the PRs. We denote the transmit power-control policy for SUs as
$\mathcal{P}_{\rm MAC}$, which specifies a mapping from the fading
channel realization $\mv{\alpha}$ to
$\mv{p}(\mv{\alpha})\triangleq[p_1(\mv{\alpha}),\ldots,p_K(\mv{\alpha})]$,
where $p_k(\mv{\alpha})$ denotes the transmit power assigned to the
$k$-th SU. The long-term (LT) transmit-power constraint (TPC) for
the $k$-th SU, $k=1,\ldots,K$, can then be described as
\begin{equation}\label{eq:LT TPC MAC}
\mathbb{E}\left[p_k(\mv{\alpha})\right]\leq P^{\rm LT}_k
\end{equation}
where the expectation is taken over $\mv{\alpha}$ with respect to
(w.r.t.) its cdf, $F(\mv{\alpha})$, and the short-term (ST)
transmit-power constraint (TPC) for the $k$-th SU is given as
\begin{equation}\label{eq:ST TPC MAC}
p_k(\mv{\alpha})\leq P^{\rm ST}_k, \ \forall \mv{\alpha}.
\end{equation}
Similarly, we consider both the LT and ST interference-power
constraints (IPCs) at the $m$-th PR, $m=1,\ldots,M$, described as
\begin{equation}\label{eq:LT IPC MAC}
\mathbb{E}\left[\sum_{k=1}^Kg_{km}p_k(\mv{\alpha})\right]\leq \Gamma^{\rm LT}_m
\end{equation}

\begin{equation}\label{eq:ST IPC MAC}
\sum_{k=1}^Kg_{km}p_k(\mv{\alpha})\leq \Gamma^{\rm ST}_m, \ \forall
\mv{\alpha},
\end{equation}
respectively. For a given $\mathcal{P}_{\rm MAC}$, the maximum
achievable sum-rate (in nats/complex dimension) of SUs averaged over
all the fading states can be expressed as (see, e.g.,
\cite{Shamai97})
\begin{equation}\label{eq:sum rate MAC}
R_{\rm MAC}(\mathcal{P}_{\rm MAC})=\mathbb{E}\left[
\log\left(1+\sum_{k=1}^Kh_kp_k(\mv{\alpha})\right)\right].
\end{equation}
The ergodic sum capacity of the fading C-MAC can then be defined as
\begin{equation}\label{eq:sum capacity MAC}
C_{\rm MAC}=\max_{\mathcal{P}_{\rm MAC}\in \mathcal{F}} R_{\rm MAC}(\mathcal{P}_{\rm MAC})
\end{equation}
where $\mathcal{F}$ is the feasible set specified by a particular
combination of the LT-TPC, ST-TPC, LT-IPC and ST-IPC. Note that all
of these power constraints are affine and thus specify convex sets
of $p_k(\mv{\alpha})$'s, so does any of their arbitrary
combinations. Therefore, the capacity maximization in (\ref{eq:sum
capacity MAC}) is in general a {\it convex} optimization problem,
and thus efficient numerical algorithms are available to obtain its
solutions. In this paper, we consider $\mathcal{F}$ to be generated
by one of the following four possible combinations of power
constraints, which are LT-TPC with LT-IPC, LT-TPC with ST-IPC,
ST-TPC with LT-IPC, and ST-TPC with ST-IPC, for the purpose of
exposition.

\begin{figure}
\psfrag{a}{SU-1} \psfrag{b}{SU-2} \psfrag{c}{SU-K} \psfrag{d}{PR-1}
\psfrag{e}{PR-2} \psfrag{f}{PR-M} \psfrag{g}{BS} \psfrag{h}{$h_1$}
\psfrag{i}{$h_2$} \psfrag{j}{$h_K$}
\psfrag{k}{$f_1$}\psfrag{l}{$f_2$}\psfrag{m}{$f_M$}
\begin{center}
\scalebox{0.9}{\includegraphics*[52pt,542pt][331pt,771pt]{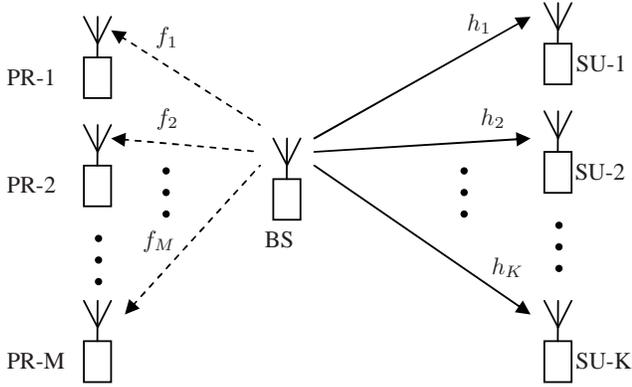}}
\end{center}\vspace{-0.4in}
\caption{The cognitive SISO-BC where the secondary BS transmits to $K$ SUs while possibly interfering with each of $M$ PRs.} \label{fig:C-BC
model}\vspace{-0.1in}
\end{figure}

Next, we consider the SISO fading C-BC as shown in Fig.
\ref{fig:C-BC model}, where the secondary BS transmits to $K$ SUs
while possibly interfering with each of the $M$ PRs. Without loss of
generality, we use the same notation, $h_k$, to denote the channel
power gain from the BS to the $k$-th SU, $k=1,\ldots,K$, as for the
C-MAC. The interference channel power gains from the BS to PRs are
denoted as $f_m$, $m=1,\ldots,M$, which are assumed to be mutually
independent and also independent of $h_k$'s. Similar to the C-MAC
case, let $\mv{\beta}\triangleq[h_1\cdots h_K, f_1\cdots f_M]$
denote the power gain vector for all the channels involved in the
C-BC, which we assume to be drawn from an ergodic vector random
process with a continuous, differentiable joint cdf, denoted by
$G(\mv{\beta})$. It is assumed that the additive noises at all SU
receivers are independent CSCG random variables each distributed as
$\mathcal{CN}(0,1)$; and the optimal Gaussian codebook is used by
the transmitter of the BS. With the available channel distribution
information $G(\mv{\beta})$ as well as the CSI on $h_k$'s and
$f_m$'s at each transmission block, the secondary BS designs its
downlink transmissions to the SUs by dynamically allocating its
transmit power levels and rate values. Let $\mathcal{P}_{\rm BC}$
denote the transmit power-control policy for the secondary BS, which
specifies a mapping from the fading channel realization $\mv{\beta}$
to its transmit power $q(\mv{\beta})$. Similarly as for C-MAC, we
define the LT-TPC and ST-TPC for the secondary BS as
\begin{equation}\label{eq:LT TPC BC}
\mathbb{E}\left[q(\mv{\beta})\right]\leq Q^{\rm LT}
\end{equation}
where the expectation is taken over $\mv{\beta}$ w.r.t. its cdf,
$G(\mv{\beta})$, and
\begin{equation}\label{eq:ST TPC BC}
q(\mv{\beta}) \leq Q^{\rm ST}, \ \forall \mv{\beta},
\end{equation}
respectively; and the LT-IPC and ST-IPC at the $m$-th PR,
$m=1,\ldots,M$, as
\begin{equation}\label{eq:LT IPC BC}
\mathbb{E}\left[f_{m}q(\mv{\beta})\right]\leq \Gamma^{\rm LT}_m
\end{equation}
and
\begin{equation}\label{eq:ST IPC BC}
f_{m}q(\mv{\beta})\leq \Gamma^{\rm ST}_m, \ \forall \mv{\beta},
\end{equation}
respectively.

Now, consider an auxiliary SISO fading C-MAC for the SISO fading
C-BC of interest, where $h_{k}$'s remain the same as in the C-BC
while $g_{km}=f_m, \forall k\in\{1,\ldots,K\}, m\in\{1,\ldots,M\}$.
Thus, the channel realization $\mv{\alpha}$ in this auxiliary C-MAC
can be concisely represented by $\mv{\beta}$ in the C-BC. By
applying the MAC-BC duality result \cite{Goldsmith04} at each fading
state, for a given $q(\mv{\beta})$, the maximum sum-rate of the C-BC
can be obtained from its auxiliary C-MAC as
\begin{equation}
\max_{\sum_{k=1}^Kp_k(\mv{\beta})=q(\mv{\beta})}
\log\left(1+\sum_{k=1}^Kh_kp_k(\mv{\beta})\right).
\end{equation}
Therefore, the ergodic sum capacity of the fading C-BC can be
equivalently obtained from its auxiliary fading C-MAC as
\begin{equation}\label{eq:sum capacity BC new}
C_{\rm BC}=\max_{\mathcal{P}_{\rm MAC}\in \mathcal{D}} R_{\rm
MAC}(\mathcal{P}_{\rm MAC}).
\end{equation}
where $\mathcal{D}$ is specified by a particular combination of
(\ref{eq:LT TPC BC})-(\ref{eq:ST IPC BC}), with $q(\mv{\beta})$
being replaced by $\sum_{k=1}^Kp_k(\mv{\beta})$. Note that we can
obtain the optimal power-control policy $\mathcal{P}_{\rm BC}$ to
achieve the ergodic sum capacity of the C-BC from the corresponding
optimal $\mathcal{P}_{\rm MAC}$ by solving the maximization problem
in (\ref{eq:sum capacity BC new}). Similarly as for $C_{\rm MAC}$ in
(\ref{eq:sum capacity MAC}), it can be shown that the optimization
problem for obtaining $C_{\rm BC}$ in (\ref{eq:sum capacity BC new})
is convex.

\section{Ergodic Sum Capacity for Fading Cognitive MAC}
\label{sec:MAC}

In this section, we consider the SISO fading C-MAC under different
mixed transmit-power and interference-power constraints. For each
case, we derive the optimal power-control policy for achieving the
ergodic sum capacity, as well as the conditions for the optimality
of D-TDMA.

\subsection{Long-Term Transmit-Power and Interference-Power Constraints}

\begin{figure*}
\begin{align}\label{eq:Lagrangian LT LT}
\mathcal{L}(\{p_k(\mv{\alpha})\},\{\lambda_k\},\{\mu_m\})=\mathbb{E}\left[\log(1+\sum_{k=1}^Kh_kp_k(\mv{\alpha}))\right]-\sum_{k=1}^K\lambda_k\{
\mathbb{E}[p_k(\mv{\alpha})]- P^{\rm LT}_k\} -\sum_{m=1}^M\mu_m\left\{\mathbb{E}\left[\sum_{k=1}^Kg_{km}p_k(\mv{\alpha})\right]- \Gamma^{\rm
LT}_m\right\}
\end{align}
\end{figure*}

From (\ref{eq:sum rate MAC}) and (\ref{eq:sum capacity MAC}), the
ergodic sum capacity under the LT-TPC and the LT-IPC can be obtained
by solving the following optimization problem:
\begin{problem}\label{prob:LT LT MAC}
\begin{eqnarray*}
\mathop{\mathtt{Maximize \ (Max.)}}_{\{p_k(\mv{\alpha})\}}&&
\mathbb{E}\left[
\log\left(1+\sum_{k=1}^Kh_kp_k(\mv{\alpha})\right)\right]
\\ \mathtt {subject \ to \ (s.t.)} && (\ref{eq:LT TPC MAC}), (\ref{eq:LT IPC
MAC}).
\end{eqnarray*}
\end{problem}
The proposed solution to the above problem is based on the Lagrange duality method. First, we write the Lagrangian of this problem as in
(\ref{eq:Lagrangian LT LT}) (shown on the next page), where $\lambda_k$ and $\mu_m$ are the nonnegative dual variables associated with each
corresponding power constraint in (\ref{eq:LT TPC MAC}) and (\ref{eq:LT IPC MAC}), respectively, $k=1,\ldots,K$, $m=1,\ldots,M$. Then, the
Lagrange dual function, $g(\{\lambda_k\},\{\mu_m\})$, is defined as
\begin{eqnarray}\label{eq:Lagrange dual LT LT}
\max_{\{p_k(\mv{\alpha})\}: p_k(\mv{\alpha})\geq 0, \forall k, \mv{\alpha} }\mathcal{L}(\{p_k(\mv{\alpha})\},\{\lambda_k\},\{\mu_m\}).
\end{eqnarray}
The dual function serves as an upper bound on the optimal value of
the original (primal) problem, denoted by $r^{*}$, i.e., $r^*\leq
g(\{\lambda_k\},\{\mu_m\})$ for any nonnegative $\lambda_k$'s and
$\mu_m$'s. The dual problem is then defined as
\begin{equation}\label{eq:dual problem LT LT}
\min_{\{\lambda_k\},\{\mu_m\}:\lambda_k\geq 0, \mu_m\geq 0, \forall
k, m} g(\{\lambda_k\},\{\mu_m\}).
\end{equation}
Let the optimal value of the dual problem be denoted by $d^*$, which
is achievable by the optimal dual solutions $\{\lambda_k^*\}$ and
$\{\mu_m^*\}$, i.e., $d^*= g(\{\lambda_k^*\},\{\mu_m^*\})$. For a
convex optimization problem with a strictly feasible point as in our
problem, the Slater's condition \cite{Boydbook} is satisfied and
thus the duality gap, $r^* - d^* \leq 0$, is indeed zero. This
result ensures that Problem \ref{prob:LT LT MAC} can be equivalently
solved from its dual problem, i.e., by first maximizing its
Lagrangian to obtain the dual function for some given dual
variables, and then minimizing the dual function over the dual
variables.

Consider first the problem for obtaining
$g(\{\lambda_k\},\{\mu_m\})$ with some given $\lambda_k$'s and
$\mu_m$'s. It is interesting to observe that this dual function can
also be written as
\begin{eqnarray}\label{eq:dual function rewrite LT LT}
g(\{\lambda_k\},\{\mu_m\})= \mathbb{E}\left[g'(\mv{\alpha})\right] + \sum_{k=1}^K\lambda_kP^{\rm LT}_k + \sum_{m=1}^M \mu_m\Gamma^{\rm LT}_m
\end{eqnarray}
where
\begin{align}\label{eq:dual function per state LT LT}
g'(\mv{\alpha})=&\max_{\{p_k(\mv{\alpha})\}: p_k(\mv{\alpha})\geq 0, \forall k}\log\left(1+\sum_{k=1}^Kh_kp_k(\mv{\alpha})\right) \nonumber \\
& - \sum_{k=1}^K\lambda_kp_k(\mv{\alpha}) -\sum_{m=1}^M\mu_m\sum_{k=1}^Kg_{km}p_k(\mv{\alpha}).
\end{align}
Thus, the dual function can be obtained via solving for
sub-dual-function $g'(\mv{\alpha})$'s, each for one fading state
with channel realization, $\mv{\alpha}$. Notice that the
maximization problems in (\ref{eq:dual function per state LT LT})
with different $\mv{\alpha}$'s all have the same structure and thus
can be solved using the same computational routine. For conciseness,
we drop the $\mv{\alpha}$ in $p_k($\mv{\alpha}$)$'s for the
maximization problem at each fading state and express it as
\begin{problem}\label{prob:LT LT MAC per state}
\begin{align}
\mathop{\mathtt{Max.}}_{\{p_k\}} & ~~\log\left(1+\sum_{k=1}^Kh_kp_k\right)-  \sum_{k=1}^K\lambda_kp_k -\sum_{m=1}^M\mu_m\sum_{k=1}^Kg_{km}p_k
\\ \mathtt {s.t.} &~~ p_k\geq 0, \ \forall k.
\end{align}
\end{problem}
This problem is convex since its objective function is concave and
its constraints are all linear. By introducing nonnegative dual
variables $\delta_k, k=1,\ldots, K$, for the corresponding
constraints on the nonnegativity of $p_k$'s, we can write the
following KKT conditions \cite{Boydbook} that need to be satisfied
by the optimal primal and dual solutions of Problem \ref{prob:LT LT
MAC per state}, denoted as $\{p_k^*\}$ and $\{\delta_k^*\}$,
respectively.
\begin{align}
\frac{h_k}{1+\sum_{l=1}^Kh_lp_l^*}-\lambda_k-\sum_{m=1}^M\mu_mg_{km}+\delta_k^*&=0,
\forall k \label{eq:KKT1 LT LT} \\
\delta_k^*p_k^*&=0, \forall k \label{eq:KKT2 LT LT}
\end{align}
with $p_k^*\geq 0$ and $\delta_k^*\geq 0, \forall k$. The following
lemma can then be obtained from these KKT optimality conditions:
\begin{lemma}\label{lemma:TDMA opt LT LT}
The optimal solution of Problem \ref{prob:LT LT MAC per state} has
at most one user indexed by $i$, $i\in\{1,\ldots,K\}$, with $p_i^*>
0$, i.e., the solution follows a D-TDMA structure.
\end{lemma}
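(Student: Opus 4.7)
The plan is to read the structure of the optimum directly off the KKT conditions (\ref{eq:KKT1 LT LT})-(\ref{eq:KKT2 LT LT}). The key observation will be that every active user at a given fading state must satisfy a common equation relating its channel gains to the (fading-state-independent) dual variables, and that continuity and independence of the fading distribution then force this equation to pin down a single user almost surely.

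To carry this out, I would suppose two users $i \neq j$ both have $p_i^*, p_j^* > 0$. Complementary slackness (\ref{eq:KKT2 LT LT}) then forces $\delta_i^* = \delta_j^* = 0$, and the stationarity condition (\ref{eq:KKT1 LT LT}) collapses to
\begin{equation*}
\frac{h_i}{S^*} = \lambda_i + \sum_{m=1}^M \mu_m g_{im}, \qquad \frac{h_j}{S^*} = \lambda_j + \sum_{m=1}^M \mu_m g_{jm},
\end{equation*}
where $S^* := 1 + \sum_l h_l p_l^*$ does not depend on the user index. Eliminating $S^*$ yields the cross-user identity
\begin{equation*}
\frac{h_i}{\lambda_i + \sum_m \mu_m g_{im}} = \frac{h_j}{\lambda_j + \sum_m \mu_m g_{jm}}.
\end{equation*}
An equivalent and perhaps more transparent viewpoint comes from the substitution $x_k = h_k p_k$, which turns the objective of Problem \ref{prob:LT LT MAC per state} into $\log(1 + \sum_k x_k) - \sum_k \tilde{w}_k x_k$ with $\tilde{w}_k := (\lambda_k + \sum_m \mu_m g_{km})/h_k$; for any fixed total $\sum_k x_k$, the linear penalty is minimized by placing all mass on $k^\star = \arg\min_k \tilde{w}_k$, so the optimum is inherently a corner solution.

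Because $\lambda_k$ and $\mu_m$ are determined by the outer dual problem (\ref{eq:dual problem LT LT}) and are therefore deterministic constants independent of $\mv{\alpha}$, the displayed equality between two distinct users describes a measure-zero subset of the fading space under the standing assumption that $F(\mv{\alpha})$ is continuous and that the $h_k$'s and $g_{km}$'s are mutually independent. Hence with probability one in $\mv{\alpha}$ the minimizer $k^\star$ is unique, and at most one $p_k^*$ can be positive. The main technical point I expect to need care with is ensuring that $\lambda_k + \sum_m \mu_m g_{km} > 0$ so that $\tilde{w}_k$ is well-defined and the preceding division is legitimate; otherwise Problem \ref{prob:LT LT MAC per state} would be unbounded above, contradicting the finite optimum guaranteed by the zero duality gap argued in the excerpt. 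On the exceptional measure-zero set where $\tilde{w}_k$'s tie, any single-user assignment among the tied users remains optimal, so the D-TDMA structure can be enforced everywhere without affecting the expectation in Problem \ref{prob:LT LT MAC}.
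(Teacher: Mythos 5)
Your proposal is correct and follows essentially the same route as the paper: assume two users are simultaneously active, use complementary slackness to kill $\delta_i^*,\delta_j^*$, eliminate the common term $1+\sum_l h_l p_l^*$ from the stationarity conditions to obtain the cross-user equality $\frac{h_i}{\lambda_i+\sum_m\mu_m g_{im}}=\frac{h_j}{\lambda_j+\sum_m\mu_m g_{jm}}$, and conclude it holds with probability zero by the continuity of $F(\mv{\alpha})$ and the independence of the channel gains, the dual variables being constants. Your additional remarks (the corner-solution viewpoint via $x_k=h_kp_k$, the positivity of the denominators, and the treatment of measure-zero ties) are sound but not needed beyond what the paper's argument already establishes.
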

\begin{proof}
Please refer to Appendix \ref{appendix:proof TDMA opt LT LT}.
\end{proof}
Given Lemma \ref{lemma:TDMA opt LT LT}, the remaining tasks for
solving Problem \ref{prob:LT LT MAC per state} are to find the user
that transmits at each fading state as well as the optimal transmit
power, which are given by the following lemma:
\begin{lemma}\label{lemma:opt power user LT LT}
In the optimal solution of Problem \ref{prob:LT LT MAC per state},
let $i$ denote the user that has $p_i^*> 0$, and $j$ be any of the
other users that has $p_j^*=0$, $i,j\in\{1,\ldots,K\}$. Then user
$i$ must satisfy
\begin{equation}\label{eq:user ordering LT LT}
\frac{h_i}{\lambda_i+\sum_{m=1}^M\mu_m
g_{im}}\geq\frac{h_j}{\lambda_j+\sum_{m=1}^M\mu_m g_{jm}}, \ \forall
j\neq i.
\end{equation}
The optimal power allocation of user $i$ is
\begin{equation}\label{eq:opt power LT LT}
p_i^*=\left( \frac{1}{\lambda_i+\sum_{m=1}^M\mu_m g_{im}}-\frac{1}{h_i}\right)^+
\end{equation}
where $(x)^+=\max(0,x)$.
\end{lemma}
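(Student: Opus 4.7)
The plan is to leverage Lemma~\ref{lemma:TDMA opt LT LT}, which has already reduced Problem~\ref{prob:LT LT MAC per state} to the selection of a single active user together with a scalar power allocation. First I would derive the power formula (\ref{eq:opt power LT LT}) by specializing the KKT equation (\ref{eq:KKT1 LT LT}) to the unique active index $i$: since $p_i^*>0$, the complementary slackness condition (\ref{eq:KKT2 LT LT}) forces $\delta_i^*=0$, so (\ref{eq:KKT1 LT LT}) collapses to
\begin{equation*}
\frac{h_i}{1+h_ip_i^*}=\lambda_i+\sum_{m=1}^M\mu_m g_{im}.
\end{equation*}
Solving for $p_i^*$ yields the expression inside the $(\cdot)^+$ in (\ref{eq:opt power LT LT}); the $(\cdot)^+$ projection then simply absorbs the degenerate case where this quantity is nonpositive, which is consistent with $p_i^*=0$ and $\delta_i^*\geq 0$ in the KKT system.

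Next I would establish the ordering condition (\ref{eq:user ordering LT LT}) that identifies which user is active. For any inactive user $j\neq i$, we have $p_j^*=0$ and $\delta_j^*\geq 0$, so (\ref{eq:KKT1 LT LT}) specialized to $j$ gives
\begin{equation*}
\frac{h_j}{1+h_ip_i^*}=\lambda_j+\sum_{m=1}^M\mu_m g_{jm}-\delta_j^*\leq \lambda_j+\sum_{m=1}^M\mu_m g_{jm}.
\end{equation*}
Dividing the two KKT relations, using the fact that $1+h_ip_i^*$ is the same common denominator in both, yields
\begin{equation*}
\frac{h_i}{\lambda_i+\sum_{m=1}^M\mu_m g_{im}}=1+h_ip_i^*\geq \frac{h_j}{\lambda_j+\sum_{m=1}^M\mu_m g_{jm}},
\end{equation*}
which is exactly (\ref{eq:user ordering LT LT}).

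There is no real obstacle here; the main subtlety is only bookkeeping. One should note that in the degenerate case where the bracket in (\ref{eq:opt power LT LT}) is nonpositive for every $k$, the entire-zero allocation is optimal and the ordering condition degenerates (every user can be labelled ``$i$''). I would handle this edge case explicitly by observing that if $h_k\leq \lambda_k+\sum_m\mu_m g_{km}$ for all $k$, then the unconstrained gradient of the objective at $p_k=0$ is already nonpositive in every coordinate, so the all-zero solution satisfies the KKT system and attains the maximum. Apart from this, the argument is a direct translation of the single-user activation result of Lemma~\ref{lemma:TDMA opt LT LT} into an ordering rule, followed by the standard single-user water-filling computation.
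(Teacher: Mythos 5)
Your proposal is correct and follows essentially the same route as the paper's own proof: both obtain (\ref{eq:opt power LT LT}) from the stationarity condition (\ref{eq:KKT1 LT LT}) with $\delta_i^*=0$ (the paper phrases this as directly solving the reduced single-user problem, which is the same computation), and both derive (\ref{eq:user ordering LT LT}) by comparing the equality for the active user $i$ with the inequality for each inactive user $j$ coming from $\delta_j^*\geq 0$. Your explicit treatment of the all-zero edge case is a small addition the paper omits, but it does not change the argument.
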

\begin{proof}
Please refer to Appendix \ref{appendix:proof opt user power LT LT}.
\end{proof}

Solutions of Problem \ref{prob:LT LT MAC per state} across all the
fading states are basically an optimal mapping between an arbitrary
channel realization and the transmit power allocation for any given
$\lambda_k$'s and $\mu_m$'s, which can then be used to obtain the
dual function $g(\{\lambda_k\},\{\mu_m\})$. Next, the dual function
needs to be minimized over $\lambda_k$'s and $\mu_m$'s to obtain the
optimal dual solutions $\lambda_k^*$'s and $\mu_m^*$'s with which
the duality gap is zero. One method to iteratively update
$\lambda_k$'s and $\mu_m$'s toward their optimal values is the
ellipsoid method \cite{BGT81}, of which we omit the details here for
brevity.

Lemma \ref{lemma:TDMA opt LT LT} suggests that at each fading state,
at most one SU can transmit, i.e., D-TDMA is optimal. Since this
result holds for any given $\lambda_k$'s and $\mu_m$'s, it must be
true for the optimal dual solutions $\lambda_k^*$'s and $\mu_m^*$'s
under which the optimal value of the original problem or the ergodic
sum capacity is achieved. Therefore, we have the following theorem:

\begin{theorem}\label{theorem:TDMA opt LT LT}
D-TDMA is optimal across all the fading states for achieving the
ergodic sum capacity of the fading C-MAC under the LT-TPC jointly
with the LT-IPC. The optimal rules to select the SU for transmission
at a particular fading state and to determine its transmit power are
given by Lemma \ref{lemma:opt power user LT LT} with all
$\lambda_k$'s and $\mu_m$'s replaced by their optimal dual solutions
for Problem \ref{prob:LT LT MAC}.
\end{theorem}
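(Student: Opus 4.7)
The plan is to leverage strong duality together with the per-state structural result already established in Lemma \ref{lemma:TDMA opt LT LT}. Problem \ref{prob:LT LT MAC} is a convex maximization with affine constraints, and the zero power-allocation policy is strictly feasible, so Slater's condition yields zero duality gap. Consequently, any primal-optimal policy $\{p_k^*(\mv{\alpha})\}$ together with the optimal dual variables $\{\lambda_k^*\},\{\mu_m^*\}$ must jointly satisfy the KKT conditions, and in particular $\{p_k^*(\mv{\alpha})\}$ must maximize the Lagrangian for the fixed dual values $\{\lambda_k^*\},\{\mu_m^*\}$.

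I would first recall the decomposition in (\ref{eq:dual function rewrite LT LT})--(\ref{eq:dual function per state LT LT}): because expectation and summation commute with the pointwise Lagrangian, maximizing $\mathcal{L}$ over the policy reduces, almost surely in $\mv{\alpha}$, to solving the per-state Problem \ref{prob:LT LT MAC per state} with the same dual parameters. Plugging in the optimal dual solutions $\lambda_k^*$ and $\mu_m^*$, Lemma \ref{lemma:TDMA opt LT LT} applies verbatim to each such per-state subproblem: there is an optimizer with at most one active user. Collecting these per-state optimizers produces a feasible primal policy whose objective equals $g(\{\lambda_k^*\},\{\mu_m^*\})=d^*=r^*$, hence is primal-optimal; it is by construction D-TDMA. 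The user-selection rule (\ref{eq:user ordering LT LT}) and water-filling expression (\ref{eq:opt power LT LT}) from Lemma \ref{lemma:opt power user LT LT} then give the concrete allocation once $\lambda_k^*$ and $\mu_m^*$ are substituted.

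The one subtle point I expect to address explicitly is the possibility that the per-state Lagrangian maximizer is not unique: ties in (\ref{eq:user ordering LT LT}) could in principle admit non-TDMA maximizers sharing the same objective value. Here the continuity assumption on the joint cdf $F(\mv{\alpha})$ rescues the argument, because the event that two of the ratios $h_k/(\lambda_k^* + \sum_m \mu_m^* g_{km})$ coincide has Lebesgue (hence $F$-) measure zero; on the complement, Lemma \ref{lemma:TDMA opt LT LT} pins down a unique single-user optimizer. Thus a D-TDMA policy attains the ergodic sum capacity almost surely, which is all that is needed for the expected-rate objective.

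The main obstacle in writing this proof is just packaging the reduction cleanly: tying together (i) the strong-duality justification, (ii) the pointwise equivalence between the Lagrangian maximization and Problem \ref{prob:LT LT MAC per state}, and (iii) the measurability of the per-state selection, so that the constructed policy is a bona fide measurable mapping $\mv{\alpha}\mapsto \mv{p}(\mv{\alpha})$. None of these steps is deep, but the argument has to be stated carefully because Lemma \ref{lemma:TDMA opt LT LT} was proved for fixed dual parameters, whereas we need the conclusion uniformly over the fading state using the \emph{optimal} dual parameters.
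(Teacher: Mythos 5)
Your proposal follows essentially the same route as the paper: strong duality via Slater's condition, per-state decomposition of the Lagrangian into Problem \ref{prob:LT LT MAC per state}, and application of Lemmas \ref{lemma:TDMA opt LT LT} and \ref{lemma:opt power user LT LT} at the optimal dual variables. Your explicit handling of the measure-zero tie events and the measurability of the per-state selection is more careful than the paper's one-paragraph argument, but it is the same proof.
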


\begin{remark}
Notice that if the LT-IPC given by (\ref{eq:LT IPC MAC}) is not
present in Problem \ref{prob:LT LT MAC}, or equivalently, the LT-IPC
values $\Gamma_m^{\rm LT}$'s are sufficiently large such that these
constraints are inactive with the optimal power solutions of Problem
\ref{prob:LT LT MAC}, it is then easy to verify from its KKT
conditions that the optimal dual solutions for all $\mu_m$'s must be
equal to zero. From (\ref{eq:user ordering LT LT}), it then follows
that only user $i$ with the largest $\frac{h_i}{\lambda_i}$ among
all the users can probably transmit at a given fading state. This
result is consistent with that obtained earlier in \cite{Knopp95}
for the traditional fading SISO-MAC without the LT-IPC. However,
under the additional LT-IPC, from (\ref{eq:user ordering LT LT}) and
(\ref{eq:opt power LT LT}) it is observed that the selected SU for
transmission and its transmit power depend on the interference-power
``prices'' $\mu_m$'s for different PRs and the instantaneous
interference channel power gains $g_{km}$'s.
\end{remark}

\subsection{Long-Term Transmit-Power and Short-Term
Interference-Power Constraints}

The ergodic sum capacity under the LT-TPC but with the ST-IPC can be
obtained as the optimal value of the following problem:
\begin{problem}\label{prob:LT ST MAC}
\begin{eqnarray*}
\mathop{\mathtt{Max.}}_{\{p_k(\mv{\alpha})\}}&& \mathbb{E}\left[
\log\left(1+\sum_{k=1}^Kh_kp_k(\mv{\alpha})\right)\right]
\\ \mathtt {s.t.} && (\ref{eq:LT TPC MAC}), (\ref{eq:ST IPC
MAC}).
\end{eqnarray*}
\end{problem}
Similar to Problem \ref{prob:LT LT MAC}, we apply the Lagrange
duality method to solve the above problem. However, different from
Problem \ref{prob:LT LT MAC} that has both the long-term
transmit-power and interference-power constraints, it is noted that
in Problem \ref{prob:LT ST MAC}, only the transmit-power constraints
are long-term while the interference-power constraints are
short-term. Therefore, the dual variables associated with the
long-term constraints should be introduced first, in order to
decompose the problem into individual subproblems over different
fading states, to each of which the corresponding short-term
constraints can then be applied. Let $\lambda_k$ be the nonnegative
dual variable associated with the corresponding LT-TPC in
(\ref{eq:LT TPC MAC}), $k=1,\ldots,K$. The Lagrangian of this
problem can then be written as
\begin{align}\label{eq:Lagrangian LT ST}
\mathcal{L}(\{p_k(\mv{\alpha})\},\{\lambda_k\})=&~\mathbb{E}\left[ \log\left(1+\sum_{k=1}^Kh_kp_k(\mv{\alpha})\right)\right] \nonumber
\\& ~ -\sum_{k=1}^K\lambda_k\left\{ \mathbb{E}\left[p_k(\mv{\alpha})\right]- P^{\rm LT}_k\right\}.
\end{align}
Let $\mathcal{A}$ denote the set of $\{p_k(\mv{\alpha})\}$ specified
by the remaining ST-IPC in (\ref{eq:ST IPC MAC}). The Lagrange dual
function is then expressed as
\begin{eqnarray}\label{eq:Lagrange dual LT ST}
g(\{\lambda_k\})=\max_{\{p_k(\mv{\alpha})\}\in\mathcal{A}}\mathcal{L}(\{p_k(\mv{\alpha})\},\{\lambda_k\}).
\end{eqnarray}
The dual problem is accordingly defined as $\min_{\lambda_k\geq 0,
\forall k} g(\{\lambda_k\})$. Similar to Problem \ref{prob:LT LT
MAC}, it can be verified that the duality gap is zero for the convex
optimization problem addressed here; and thus solving its dual
problem is equivalent to solving the original problem.

Consider first the problem for obtaining $g(\{\lambda_k\})$ with
some given $\lambda_k$'s. Similar to Problem \ref{prob:LT LT MAC},
this dual function can be decomposed into individual
sub-dual-functions, each for one fading state, i.e.,
\begin{eqnarray}\label{eq:dual function rewrite LT ST}
g(\{\lambda_k\})= \mathbb{E}\left[g'(\mv{\alpha})\right] + \sum_{k=1}^K\lambda_kP^{\rm LT}_k
\end{eqnarray}
where
\begin{align}\label{eq:dual function per state LT ST}
g'(\mv{\alpha})=\max_{\{p_k(\mv{\alpha})\}\in\mathcal{A}(\mv{\alpha})}\log(1+\sum_{k=1}^Kh_kp_k(\mv{\alpha}))-
\sum_{k=1}^K\lambda_kp_k(\mv{\alpha})
\end{align}
with $\mathcal{A}(\mv{\alpha})$ denoting the subset of $\mathcal{A}$
corresponding to the fading state with channel realization
$\mv{\alpha}$. After dropping the $\mv{\alpha}$ in the corresponding
maximization problem in (\ref{eq:dual function per state LT ST}) for
a particular fading state, we can express this problem as
\begin{problem}\label{prob:LT ST MAC per state}
\begin{eqnarray}
\mathop{\mathtt{Max.}}_{\{p_k\}} &&
\log\left(1+\sum_{k=1}^Kh_kp_k\right)-  \sum_{k=1}^K\lambda_kp_k
\\ \mathtt {s.t.} && \label{eq:constraint 1 LT ST}
\sum_{k=1}^Kg_{km}p_k\leq \Gamma_m^{\rm ST},\ \forall m
\\ && p_k\geq 0, \ \forall k. \label{eq:constraint 2 LT ST}
\end{eqnarray}
\end{problem}
The above problem is convex, but in general does not have a
closed-form solution. Nevertheless, it can be efficiently solved by
standard convex optimization techniques, e.g., the interior point
method \cite{Boydbook}, or alternatively,  via solving its dual
problem; and for brevity, we omit the details here. After solving
Problem \ref{prob:LT ST MAC per state} for all the fading states, we
can obtain the dual function $g(\{\lambda_k\})$. Next, the
minimization of $g(\{\lambda_k\})$ over $\lambda_k$'s can be
resolved via the ellipsoid method, similarly like that for Problem
\ref{prob:LT LT MAC}.

For this case, we next focus on studying the conditions under which
D-TDMA is optimal across the fading states. This can be done by
investigating the KKT optimality conditions for Problem \ref{prob:LT
ST MAC per state}. First, we introduce nonnegative dual variables
$\mu_m$, $m=1,\ldots,M$, and $\delta_k, k=1,\ldots, K$, for their
associated constraints in (\ref{eq:constraint 1 LT ST}) and
(\ref{eq:constraint 2 LT ST}), respectively. The KKT conditions for
the optimal primal and dual solutions of this problem, denoted as
$\{p_k^*\}$, $\{\mu_m^*\}$, and $\{\delta_k^*\}$, can then be
expressed as
\begin{align}
\frac{h_k}{1+\sum_{l=1}^Kh_lp_l^*}-\lambda_k-\sum_{m=1}^M\mu_m^*g_{km}+\delta_k^*&=0,
\forall k \label{eq:KKT1 LT ST} \\
\mu_m^*\left(\sum_{k=1}^Kg_{km}p_k^* -\Gamma_m^{\rm ST}\right)&=0,
\forall m \label{eq:KKT2 LT ST} \\
\delta_k^*p_k^*&=0,  \forall k \label{eq:KKT3 LT ST}
\\ \sum_{k=1}^K g_{km}p_k^* &\leq \Gamma_m^{\rm ST}, \forall m
\label{eq:KKT4 LT ST}
\end{align}
with $p_k^*\geq 0, \forall k$, $\delta_k^*\geq 0, \forall k$, and
$\mu_m^*\geq 0, \forall m$. Notice that in this case $\mu_m$'s are
local variables for each fading state instead of being fixed as in
(\ref{eq:KKT1 LT LT}) for Problem \ref{prob:LT LT MAC per state}.
From these KKT conditions, the following lemma can then be obtained:
\begin{lemma}\label{lemma:TDMA opt LT ST}
The optimal solution of Problem \ref{prob:LT ST MAC per state} has
at most $M+1$ secondary users that transmit with strictly positive
power levels.
\end{lemma}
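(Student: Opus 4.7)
The plan is to read the $M+1$ bound off the KKT system (\ref{eq:KKT1 LT ST})--(\ref{eq:KKT4 LT ST}) by a short linear-algebraic dimension count. Let $\mathcal{K} \triangleq \{k : p_k^* > 0\}$, with $N = |\mathcal{K}|$, denote the set of active users, and let $\mathcal{B} \triangleq \{m : \mu_m^* > 0\}$, with $B = |\mathcal{B}| \le M$, denote the set of ST-IPCs that bind at the optimum. For each $k \in \mathcal{K}$, (\ref{eq:KKT3 LT ST}) forces $\delta_k^* = 0$, and the stationarity condition (\ref{eq:KKT1 LT ST}) collapses to
\begin{equation*}
h_k\, t \;-\; \sum_{m\in\mathcal{B}} g_{km}\, \mu_m^* \;=\; \lambda_k, \qquad k \in \mathcal{K},
\end{equation*}
where $t \triangleq 1/(1+\sum_l h_l p_l^*) > 0$, and every $\mu_m^*$ with $m \notin \mathcal{B}$ drops out because it is zero.

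I would view the display above as a linear system of $N$ equations in the $B+1$ unknowns $\mathbf{y} \triangleq (t, \{\mu_m^*\}_{m\in\mathcal{B}})^T$. Because the KKT conditions are necessary at the optimum, the system must be consistent. The coefficient matrix $\mathbf{A}\in\mathbb{R}^{N\times(B+1)}$ has rank at most $B+1$, so its column space is a subspace of $\mathbb{R}^N$ of dimension at most $B+1$. If $N>B+1$, consistency would force the right-hand side $\mathbf{b}\triangleq(\lambda_k)_{k\in\mathcal{K}}$ to lie in this lower-dimensional subspace, i.e., every $(B+2)\times(B+2)$ minor of the augmented matrix $[\mathbf{A}\,|\,\mathbf{b}]$ would have to vanish. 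Each such minor is a non-identically-zero polynomial in the channel gains $\{h_k, g_{km}\}$; under the continuous, differentiable joint cdf assumed in Section \ref{sec:system model}, this is a probability-zero event. With probability one, therefore, $N \le B+1 \le M+1$.

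The main obstacle is making the generic-position argument fully rigorous: one has to verify that the minor polynomials in the $h_k$'s and $g_{km}$'s are not identically zero, and that the fixed $\lambda_k$'s (which are the dual optima of the enclosing long-term problem, and thus not random) cannot conspire with the channel randomness to place $\mathbf{b}$ in the exceptional algebraic set. A deterministic alternative would be a null-space perturbation argument: if $N>M+1$, the face of the feasible polytope corresponding to the current active set and binding IPCs would admit a nontrivial direction $\mathbf{d}$ such that both $\sum_k h_k d_k = 0$ (the log part of the objective is unchanged) and $\sum_k \lambda_k d_k \neq 0$, yielding a feasible move that strictly increases the Lagrangian and contradicting optimality; the delicate point there is ensuring non-negativity of the perturbed powers so that one can always cut the active set down to size $\le M+1$.
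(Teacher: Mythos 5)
Your argument is essentially the paper's own proof (Appendix C): both count the KKT stationarity equations of the active users against the number of free multipliers and invoke the continuous channel distribution to rule out the overdetermined case, the only difference being that the paper eliminates $c^*=1+\sum_l h_l p_l^*$ to obtain $|\mathcal{J}|-1$ equations in the $M$ unknowns $\mu_m^*$, whereas you retain $t=1/c^*$ as an unknown and phrase the count as a rank/consistency condition on the augmented matrix. Your version is, if anything, slightly more explicit about the probability-one genericity step (and notes the marginally sharper bound $N\le B+1$ in terms of the number of binding IPCs), but it is the same approach.
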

\begin{proof}
Please refer to Appendix \ref{appendix:proof TDMA opt LT ST}.
\end{proof}

Lemma \ref{lemma:TDMA opt LT ST} suggests that the optimal number of
SUs that can transmit at each fading state may depend on the number
of PRs or interference-power constraints. For small values of $M$,
e.g., $M=1$ corresponding to a single PR, the number of active SUs
at each fading state can be at most two, suggesting that D-TDMA may
be very close to being optimal in this case.

In the theorem below, we present the general conditions, for any $K$
and $M$, under which D-TDMA is both necessary and sufficient to be
optimal at a particular fading state. Again, without loss of
generality, here we use $\lambda_k$'s instead of their optimal dual
solutions obtained by the ellipsoid method.

\begin{theorem}\label{theorem:TDMA opt conditions LT ST}
D-TDMA is optimal at an arbitrary fading state for achieving the
ergodic sum capacity of the fading C-MAC under the LT-TPC jointly
with the ST-IPC if and only if there exists one user $i$ (the user
that transmits) that satisfies either one of the following two sets
of conditions. Let $j$ be any of the other users,
$j\in\{1,\ldots,K\}, j\neq i$; and
$m'=\arg\min_{m\in\{1,\ldots,M\}}\frac{\Gamma_m^{\rm ST}}{g_{im}}$.
\begin{itemize}
\item
$\frac{1}{\lambda_i}-\frac{1}{h_i}\leq\frac{\Gamma_{m'}^{\rm
ST}}{g_{im'}}$ and $\frac{h_i}{\lambda_i}\geq\frac{h_j}{\lambda_j},
\forall j\neq i$. In this case,
$p_i^*=\left(\frac{1}{\lambda_i}-\frac{1}{h_i}\right)^+$;

\item $\frac{1}{\lambda_i}-\frac{1}{h_i}>\frac{\Gamma_{m'}^{\rm
ST}}{g_{im'}}$ and
$\left(h_jg_{im'}-h_ig_{jm'}\right)\frac{g_{im'}}{g_{im'}+h_i\Gamma_{m'}^{\rm
ST}}\leq\left(\lambda_jg_{im'}-\lambda_ig_{jm'}\right), \forall
j\neq i$. In this case, $p_i^*=\frac{\Gamma_{m'}^{\rm
ST}}{g_{im'}}$.
\end{itemize}
\end{theorem}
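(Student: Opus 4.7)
My plan is to exploit that Problem~\ref{prob:LT ST MAC per state} is convex and satisfies Slater's condition, so its KKT conditions (\ref{eq:KKT1 LT ST})--(\ref{eq:KKT4 LT ST}) are both necessary and sufficient for optimality. I would impose the single-active-user structure---$p_i^*>0$ and $p_j^*=0$ for $j\neq i$---on these conditions and then read off exactly when such a primal/dual configuration is realizable. The natural case split is whether the candidate solution activates any ST-IPC, which turns out to match the two bullets of the theorem.

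In the first case, I would set $\mu_m^*=0$ for all $m$; then the stationarity equation at user $i$ (with $\delta_i^*=0$) gives the water-filling power $p_i^*=(1/\lambda_i-1/h_i)^+$. Primal feasibility $g_{im}p_i^*\leq\Gamma_m^{\rm ST}$ for every $m$ is equivalent to its tightest instance, $1/\lambda_i-1/h_i\leq\Gamma_{m'}^{\rm ST}/g_{im'}$, where $m'=\arg\min_{m}\Gamma_m^{\rm ST}/g_{im}$. For each inactive user $j$, the stationarity condition reduces to $\delta_j^*=\lambda_j-h_j\lambda_i/h_i\geq 0$, i.e., $h_i/\lambda_i\geq h_j/\lambda_j$, recovering the first bullet exactly.

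In the second case, because only user $i$ contributes interference, any active PR $m$ pins $p_i^*$ to $\Gamma_m^{\rm ST}/g_{im}$, and consistency forces exactly one active PR, which must be the minimizer $m'$, so $p_i^*=\Gamma_{m'}^{\rm ST}/g_{im'}$. Setting $A=g_{im'}/(g_{im'}+h_i\Gamma_{m'}^{\rm ST})$ so that $1/(1+h_ip_i^*)=A$, the stationarity condition at user $i$ yields $\mu_{m'}^*=(h_iA-\lambda_i)/g_{im'}$; requiring $\mu_{m'}^*>0$ (to place us strictly in this case) gives $1/\lambda_i-1/h_i>\Gamma_{m'}^{\rm ST}/g_{im'}$. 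Substituting $\mu_{m'}^*$ into the stationarity inequalities $\delta_j^*\geq 0$ for the inactive users and clearing the denominator $g_{im'}$ produces the stated inequality $(h_jg_{im'}-h_ig_{jm'})A\leq\lambda_jg_{im'}-\lambda_ig_{jm'}$. For sufficiency I would reverse the construction: under either set of hypotheses, the candidate primal together with the dual variables prescribed above satisfies every KKT condition, and optimality follows from convexity.

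The main obstacle I anticipate is the algebraic rearrangement in the second case: the inequality coming from $\delta_j^*\geq 0$ is naturally asymmetric in $i$ and $j$, and massaging it into the factored form stated in the theorem requires care with signs when multiplying through by $A$ and with the common denominator $g_{im'}+h_i\Gamma_{m'}^{\rm ST}$. A minor subsidiary point is justifying that at most one IPC can be active when only user $i$ transmits---since a scalar $p_i^*$ cannot equal two distinct ratios $\Gamma_m^{\rm ST}/g_{im}$ unless the minima coincide, in which case any minimizer serves as $m'$---but this is immediate once noted.
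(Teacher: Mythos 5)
Your proposal is correct and follows essentially the same route as the paper's proof in Appendix D: impose the single-transmitter structure on the KKT conditions of Problem \ref{prob:LT ST MAC per state}, split on whether all $\mu_m^*$ vanish or exactly one $\mu_{m'}^*>0$ (ruling out two positive multipliers by the zero-probability coincidence of $\Gamma_m^{\rm ST}/g_{im}$ ratios), derive the stated powers and inequalities in each case, and invoke sufficiency of the KKT conditions for the convex problem. The algebra you flag in the second case works out exactly as you describe and matches the paper's computation.
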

\begin{proof}
Please refer to Appendix \ref{appendix:proof TDMA opt conditions LT
ST}.
\end{proof}

\begin{remark}
Notice that in Theorem \ref{theorem:TDMA opt conditions LT ST}, the
first set of conditions holds when the optimal transmit power of the
user with the largest $\frac{h_i}{\lambda_i}$ among all the users
satisfies the ST-IPC at all the PRs; the second set of conditions
holds when the first set fails to be true, and in this case any of
$K$ SUs can be the selected user for transmission provided that it
satisfies the given $K-1$ inequalities.
\end{remark}

\begin{remark}
In the special case where only the ST-IPC given by (\ref{eq:ST IPC
MAC}) is present or active in Problem \ref{prob:LT ST MAC}, all
$\lambda_k$'s in Theorem \ref{theorem:TDMA opt conditions LT ST} can
be taken as zeros. As a result, the first set of conditions can
never be true, while the second set of conditions are simplified as
$h_jg_{im'}-h_ig_{jm'}\leq 0, \forall j\neq i$, and the optimal
power of user $i$ that transmits is still
$p_i^*=\frac{\Gamma_{m'}^{\rm ST}}{g_{im'}}$. We thus have the
following corollary if it is further assumed that there is only a
single PR. For conciseness, the index $m$ for this PR is dropped
below.
\begin{corollary}\label{corol:LT ST}
In the case that only the ST-IPC given by (\ref{eq:ST IPC MAC}) is
present in Problem \ref{prob:LT ST MAC} and, furthermore, $M=1$,
D-TDMA is optimal; and the selected user $i$ for transmission
satisfies that $\frac{h_i}{g_{i}}\geq\frac{h_j}{g_{j}}, \forall
j\neq i$, with transmit power $p_i^*=\frac{\Gamma^{\rm ST}}{g_{i}}$.
\end{corollary}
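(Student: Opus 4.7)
The plan is to obtain the corollary as a direct specialization of Theorem \ref{theorem:TDMA opt conditions LT ST}, first taking the limit in which the LT-TPC is removed and then setting $M=1$. Removing the LT-TPC from Problem \ref{prob:LT ST MAC} is equivalent to retaining it formally but setting all associated dual variables $\lambda_k$ to zero: with the LT-TPC terms absent from the Lagrangian in (\ref{eq:Lagrangian LT ST}), the per-state subproblem becomes exactly Problem \ref{prob:LT ST MAC per state} with $\lambda_k=0$ for every $k$, so the KKT system (\ref{eq:KKT1 LT ST})--(\ref{eq:KKT4 LT ST}) applies verbatim, and Theorem \ref{theorem:TDMA opt conditions LT ST} continues to describe when D-TDMA is optimal at a given fading state.

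Next, I would specialize each of the two condition sets of Theorem \ref{theorem:TDMA opt conditions LT ST} to $\lambda_k=0$. The first set requires $\frac{1}{\lambda_i}-\frac{1}{h_i}\leq \frac{\Gamma^{\rm ST}_{m'}}{g_{im'}}$, whose left-hand side diverges to $+\infty$ when $\lambda_i=0$, so this set is vacuous and the D-TDMA-optimality must come from the second set. Under $\lambda_i=\lambda_j=0$, the right-hand side of that set's inequality reduces to zero, and because the multiplicative factor $\frac{g_{im'}}{g_{im'}+h_i\Gamma^{\rm ST}_{m'}}$ is strictly positive, dividing through preserves the inequality direction and yields $h_j g_{im'}-h_i g_{jm'}\le 0$ for every $j\neq i$, i.e.\ $\frac{h_i}{g_{im'}}\geq \frac{h_j}{g_{jm'}}$. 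The optimal transmit power read off from the same theorem is $p_i^*=\frac{\Gamma^{\rm ST}_{m'}}{g_{im'}}$.

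Finally, specializing to $M=1$ makes $m'=1$ the unique index, and I would drop the $m$-subscript in line with the corollary's notation. The ordering condition then becomes $\frac{h_i}{g_i}\geq \frac{h_j}{g_j}$ for every $j\neq i$, and the selected user's power is $p_i^*=\frac{\Gamma^{\rm ST}}{g_i}$, establishing the claim.

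There is no substantive obstacle, since the corollary is essentially a book-keeping reduction of Theorem \ref{theorem:TDMA opt conditions LT ST}. The only care needed is in two small sanity checks: (i) confirming that dropping the LT-TPC constraint corresponds precisely to zeroing the dual variables $\lambda_k$ (as opposed to some singular limit), which follows because the constraints simply disappear from the Lagrangian; and (ii) verifying that the factor $g_{im'}/(g_{im'}+h_i\Gamma^{\rm ST}_{m'})$ is positive so that it can be cancelled without flipping the inequality.
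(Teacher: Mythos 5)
Your proposal is correct and follows essentially the same route as the paper: the Remark preceding the corollary likewise sets all $\lambda_k=0$ in Theorem \ref{theorem:TDMA opt conditions LT ST}, observes that the first condition set becomes vacuous while the second reduces to $h_jg_{im'}-h_ig_{jm'}\leq 0$ with $p_i^*=\Gamma_{m'}^{\rm ST}/g_{im'}$, and then specializes to $M=1$. Your two sanity checks (zeroing the duals versus dropping the constraints, and the positivity of the cancelled factor) are sound and only make explicit what the paper leaves implicit.
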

\end{remark}

\subsection{Short-Term Transmit-Power and Long-Term
Interference-Power Constraints}

In the case of ST-TPC combined with LT-IPC, the ergodic sum capacity
is the optimal value of the following optimization problem:
\begin{problem}\label{prob:ST LT MAC}
\begin{eqnarray*}
\mathop{\mathtt{Max.}}_{\{p_k(\mv{\alpha})\}}&& \mathbb{E}\left[
\log\left(1+\sum_{k=1}^Kh_kp_k(\mv{\alpha})\right)\right]
\\ \mathtt {s.t.} && (\ref{eq:ST TPC MAC}), (\ref{eq:LT IPC
MAC}).
\end{eqnarray*}
\end{problem}
Again, we apply the Lagrange duality method for the above problem.
 Let $\mu_m$'s be the nonnegative dual variables associated
with the LT-IPC in (\ref{eq:LT IPC MAC}), $m=1,\ldots,M$. The
Lagrangian of Problem \ref{prob:ST LT MAC} can then be written as
\begin{align}\label{eq:Lagrangian ST LT}
\mathcal{L}(\{p_k(\mv{\alpha})\},\{\mu_m\})=\mathbb{E}\left[ \log\left(1+\sum_{k=1}^Kh_kp_k(\mv{\alpha})\right)\right]\nonumber \\  -
\sum_{m=1}^M\mu_m\left\{\mathbb{E}\left[\sum_{k=1}^Kg_{km}p_k(\mv{\alpha})\right]- \Gamma^{\rm LT}_m\right\}.
\end{align}
Let $\mathcal{B}$ denote the set of $\{p_k(\mv{\alpha})\}$ specified
by the remaining ST-TPC in (\ref{eq:ST TPC MAC}). The Lagrange dual
function is expressed as
\begin{eqnarray}\label{eq:Lagrange dual LT ST}
g(\{\mu_m\})=\max_{\{p_k(\mv{\alpha})\}\in\mathcal{B}}\mathcal{L}(\{p_k(\mv{\alpha})\},\{\mu_m\}).
\end{eqnarray}
The dual problem is accordingly defined as $\min_{\mu_m\geq 0,
\forall m} g(\{\mu_m\})$. Similar to the previous two cases, this
dual function can be equivalently written as
\begin{eqnarray}\label{eq:dual function rewrite ST LT}
g(\{\mu_m\})= \mathbb{E}\left[g'(\mv{\alpha})\right] + \sum_{m=1}^m\mu_k\Gamma^{\rm LT}_m
\end{eqnarray}
where
\begin{align}\label{eq:dual function per state ST LT}
g'(\mv{\alpha})=&~\max_{\{p_k(\mv{\alpha})\}\in\mathcal{B}(\mv{\alpha})}\log\left(1+\sum_{k=1}^Kh_kp_k(\mv{\alpha})\right)\nonumber \\ &~ -
\sum_{m=1}^K\mu_m\sum_{k=1}^Kg_{km}p_k(\mv{\alpha})
\end{align}
with $\mathcal{B}(\mv{\alpha})$ denoting the subset of $\mathcal{B}$
corresponding to the fading state with channel realization
$\mv{\alpha}$. After dropping $\mv{\alpha}$ in the maximization
problem in (\ref{eq:dual function per state ST LT}), for each
particular fading state we can express this problem as
\begin{problem}\label{prob:ST LT MAC per state}
\begin{eqnarray}
\mathop{\mathtt{Max.}}_{\{p_k\}} &&
\log\left(1+\sum_{k=1}^Kh_kp_k\right)-
\sum_{m=1}^K\mu_m\sum_{k=1}^Kg_{km}p_k
\\ \mathtt {s.t.} && \label{eq:constraint 1 ST LT}
p_k\leq P_k^{\rm ST}, \ \forall k
\\ && p_k\geq 0, \ \forall k. \label{eq:constraint 2 ST LT}
\end{eqnarray}
\end{problem}
After solving Problem \ref{prob:ST LT MAC per state} for all the
fading states, we obtain the dual function $g(\{\mu_m\})$. The dual
problem that minimizes $g(\{\mu_m\})$ over $\mu_m$'s can then be
solved again via the ellipsoid method.

Next, we present the closed-form solution of Problem \ref{prob:ST LT
MAC per state} based on its KKT optimality conditions. Let
$\lambda_k$ and $\delta_k$, $k=1,\ldots,K$, be the dual variables
for the corresponding user power constraints in (\ref{eq:constraint
1 ST LT}) and (\ref{eq:constraint 2 ST LT}), respectively. The KKT
conditions for the optimal primal and dual solutions of this
problem, denoted as $\{p_k^*\}$, $\{\lambda_k^*\}$, and
$\{\delta_k^*\}$, can then be expressed as
\begin{align}
\frac{h_k}{1+\sum_{l=1}^Kh_lp_l^*}-\lambda_k^*-\sum_{m=1}^M\mu_m g_{km}+\delta_k^*&=0,
\forall k \label{eq:KKT1 ST LT} \\
\lambda_k^*\left(p_k^* -P_k^{\rm ST}\right)&=0,
\forall k \label{eq:KKT2 ST LT} \\
\delta_k^*p_k^*&=0, \forall k \label{eq:KKT3 ST LT}
\\ p_k^* &\leq P_k^{\rm ST}, \forall k
\label{eq:KKT4 ST LT}
\end{align}
with $p_k^*\geq 0$, $\lambda_k^*\geq 0$, and $\delta_k^*\geq 0,
\forall k$. From these KKT conditions, the following lemma can be
first obtained:
\begin{lemma}\label{lemma:opt 1 ST LT}
Let $i$ and $j$ be any two arbitrary users,
$i,j\in\{1,2,\ldots,K\}$, with $p_i^*>0$ and $p_j^*=0$ in the
optimal solution of Problem \ref{prob:ST LT MAC per state}. Then, it
must be true that $\frac{h_i}{\sum_{m=1}^M\mu_m g_{im}}\geq
\frac{h_j}{\sum_{m=1}^M\mu_mg_{jm}}$.
\end{lemma}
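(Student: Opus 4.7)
The plan is to read off the claim directly from the KKT conditions (\ref{eq:KKT1 ST LT})--(\ref{eq:KKT4 ST LT}), by isolating the ratio $h_k/\sum_m \mu_m g_{km}$ separately for an active user and an inactive user and then chaining the two inequalities through a common intermediate quantity.

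First, for the active user $i$ with $p_i^* > 0$, the complementary slackness relation (\ref{eq:KKT3 ST LT}) forces $\delta_i^* = 0$, so stationarity (\ref{eq:KKT1 ST LT}) reduces to
\begin{equation*}
\frac{h_i}{1+\sum_{l=1}^K h_l p_l^*} = \lambda_i^* + \sum_{m=1}^M \mu_m g_{im} \;\geq\; \sum_{m=1}^M \mu_m g_{im},
\end{equation*}
because $\lambda_i^* \geq 0$. Next, for the inactive user $j$ with $p_j^* = 0$, under the natural assumption $P_j^{\rm ST} > 0$ the ST-TPC (\ref{eq:KKT4 ST LT}) is strict, so (\ref{eq:KKT2 ST LT}) yields $\lambda_j^* = 0$. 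Stationarity then gives
\begin{equation*}
\frac{h_j}{1+\sum_{l=1}^K h_l p_l^*} = \sum_{m=1}^M \mu_m g_{jm} - \delta_j^* \;\leq\; \sum_{m=1}^M \mu_m g_{jm},
\end{equation*}
because $\delta_j^* \geq 0$.

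Denoting the common denominator $A = 1 + \sum_l h_l p_l^* > 0$ and dividing through, the two displays say exactly
\begin{equation*}
\frac{h_i}{\sum_{m=1}^M \mu_m g_{im}} \;\geq\; A \;\geq\; \frac{h_j}{\sum_{m=1}^M \mu_m g_{jm}},
\end{equation*}
which is the lemma. The main thing to watch out for is the degenerate case $\sum_m \mu_m g_{jm} = 0$: then the inactive-user inequality would read $h_j/A \leq -\delta_j^* \leq 0$, impossible for $h_j > 0$, so this case never arises for an inactive user with positive direct channel gain and the division in the final display is legitimate. If desired, one can symmetrically dispose of the boundary case $P_j^{\rm ST} = 0$ by noting that such a user is a priori excluded from the problem.
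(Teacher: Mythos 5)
Your proof is correct and follows essentially the same route as the paper's: use complementary slackness to set $\delta_i^*=0$ and $\lambda_j^*=0$, read off the two one-sided inequalities from the stationarity condition (\ref{eq:KKT1 ST LT}), and chain them through the common quantity $1+\sum_{l}h_lp_l^*$. The extra attention you give to the degenerate cases $\sum_m\mu_m g_{jm}=0$ and $P_j^{\rm ST}=0$ is a small refinement the paper leaves implicit.
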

\begin{proof}
Please refer to Appendix \ref{appendix:proof lemma 1 ST LT}.
\end{proof}

Let $\pi$ be a permutation over $\{1,\ldots,K\}$ such that
$\frac{h_{\pi(i)}}{\sum_{m=1}^M\mu_m g_{\pi(i)m}}\geq
\frac{h_{\pi(j)}}{\sum_{m=1}^M\mu_mg_{\pi(j)m}}$ if $i<j,
i,j\in\{1,\ldots,K\}$. Supposing that there are $|\mathcal{I}|$
users that can transmit with $\mathcal{I}\subseteq\{1,\ldots,K\}$
denoting this set of users, from Lemma \ref{lemma:opt 1 ST LT} it is
easy to verify that
$\mathcal{I}=\{\pi(1),\ldots,\pi(|\mathcal{I}|)\}$. The following
lemma then provides the closed-form solution to Problem \ref{prob:ST
LT MAC per state}:
\begin{lemma}\label{lemma:opt 2 ST LT}
The optimal solution of Problem \ref{prob:ST LT MAC per state} is
\begin{eqnarray}\label{eq:opt power ST LT}
p_{\pi(a)}^*=\left\{\begin{array}{ll} P_{\pi(a)}^{\rm ST} &
a<|\mathcal{I}| \\
\min\bigg(P_{\pi(|\mathcal{I}|)}^{\rm ST}, \bigg(\frac{h_{\pi(|\mathcal{I})|}}{\sum_{m=1}^M\mu_mg_{\pi(|\mathcal{I}|)m}}-1 \nonumber &\\
~~~~~~~-\sum_{b=1}^{|\mathcal{I}|-1}h_{\pi(b)}P_{\pi(b)}^{\rm ST}\bigg)\frac{1}{h_{\pi(|\mathcal{I}|)}}\bigg) & a=|\mathcal{I}|
\\ 0 & a> |\mathcal{I}|
\end{array} \right.
\end{eqnarray}
where $|\mathcal{I}|$ is the largest value of $x$ such that
$\frac{h_{\pi(x)}}{\sum_{m=1}^M\mu_mg_{\pi(x)m}}> 1
+\sum_{b=1}^{x-1}h_{\pi(b)}P_{\pi(b)}^{\rm ST}$.
\end{lemma}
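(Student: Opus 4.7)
The plan is to read off the solution directly from the KKT conditions (\ref{eq:KKT1 ST LT})--(\ref{eq:KKT4 ST LT}) together with the ordering already established in Lemma~\ref{lemma:opt 1 ST LT}. Writing $\nu_k:=\sum_{m=1}^M\mu_m g_{km}$ and $W:=1+\sum_l h_l p_l^*$ for brevity, the stationarity condition (\ref{eq:KKT1 ST LT}) rearranges to $h_k/W=\nu_k+\lambda_k^*-\delta_k^*$. Because complementary slackness forces $\lambda_k^*>0$ only when $p_k^*=P_k^{\rm ST}$ and $\delta_k^*>0$ only when $p_k^*=0$, I would first partition users by comparing $h_k/\nu_k$ to the common ``water level'' $W$: users with $h_k/\nu_k>W$ must saturate at $P_k^{\rm ST}$ (since then $\lambda_k^*=h_k/W-\nu_k>0$ and $\delta_k^*=0$), users with $h_k/\nu_k<W$ must be shut off ($\delta_k^*=\nu_k-h_k/W>0$), and only users with $h_k/\nu_k=W$ can sit strictly inside $(0,P_k^{\rm ST})$.

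Combined with the ordering $h_{\pi(1)}/\nu_{\pi(1)}\geq h_{\pi(2)}/\nu_{\pi(2)}\geq\cdots$ from Lemma~\ref{lemma:opt 1 ST LT}, this trichotomy forces the active set to be a prefix $\{\pi(1),\ldots,\pi(|\mathcal{I}|)\}$, and at most the last active user $\pi(|\mathcal{I}|)$ can be strictly interior---otherwise two distinct ratios $h_{\pi(a)}/\nu_{\pi(a)}$ would simultaneously have to equal $W$, contradicting the strict ordering. Hence $p_{\pi(a)}^*=P_{\pi(a)}^{\rm ST}$ for $a<|\mathcal{I}|$, and the last user's power is pinned down by the equation $W=h_{\pi(|\mathcal{I}|)}/\nu_{\pi(|\mathcal{I}|)}$: substituting $W=1+\sum_{b=1}^{|\mathcal{I}|-1}h_{\pi(b)}P_{\pi(b)}^{\rm ST}+h_{\pi(|\mathcal{I}|)}p_{\pi(|\mathcal{I}|)}^*$ and isolating $p_{\pi(|\mathcal{I}|)}^*$ yields exactly the water-filling-style expression in (\ref{eq:opt power ST LT}); the $\min$ with $P_{\pi(|\mathcal{I}|)}^{\rm ST}$ simply caps this value when it would otherwise overflow, i.e., when user $\pi(|\mathcal{I}|)$ itself is actually saturated.

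To pin down $|\mathcal{I}|$, I would argue monotonically: as users are tentatively admitted in the order $\pi(1),\pi(2),\ldots$ each at its cap $P_{\pi(b)}^{\rm ST}$, the quantity $1+\sum_{b=1}^{x-1}h_{\pi(b)}P_{\pi(b)}^{\rm ST}$ is precisely the value of $W$ just before admitting user $\pi(x)$. By the trichotomy, user $\pi(x)$ should be activated if and only if $h_{\pi(x)}/\nu_{\pi(x)}$ strictly exceeds this running threshold, which is the stated definition of $|\mathcal{I}|$.

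The main subtlety, rather than a deep obstacle, is the internal consistency between the two branches for the last active user and the choice of $|\mathcal{I}|$. When the interior formula produces a value exceeding $P_{\pi(|\mathcal{I}|)}^{\rm ST}$, user $\pi(|\mathcal{I}|)$ must actually be saturated, and one must then verify that no further user $\pi(|\mathcal{I}|+1)$ should be admitted: saturation of $\pi(|\mathcal{I}|)$ raises $W$ to $1+\sum_{b=1}^{|\mathcal{I}|}h_{\pi(b)}P_{\pi(b)}^{\rm ST}$, which by the maximality of $|\mathcal{I}|$ in the threshold characterization already exceeds $h_{\pi(|\mathcal{I}|+1)}/\nu_{\pi(|\mathcal{I}|+1)}$, closing the argument.
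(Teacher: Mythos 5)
Your proposal is correct and follows essentially the same route as the paper's Appendix F: a direct KKT analysis in which ties among the ratios $h_k/\sum_m\mu_m g_{km}$ occur with probability zero (so at most one user is strictly interior and the active set is a prefix under $\pi$), the interior user's power is read off from the sum-power/water-level equation, and $|\mathcal{I}|$ is pinned down by the same pair of threshold inequalities at $x=|\mathcal{I}|$ and $x=|\mathcal{I}|+1$ combined with the monotonicity of both sides in $x$. Your ``water-level trichotomy'' is just a compact repackaging of the paper's auxiliary Lemma on the uniqueness of the interior user, so no substantive difference or gap.
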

\begin{proof}
Please refer to Appendix \ref{appendix:proof lemma 2 ST LT}.
\end{proof}

From Lemma \ref{lemma:opt 2 ST LT}, it follows that in the case of
ST-TPC along with LT-IPC, for the active secondary users at one
fading state, there is at most one user that transmits with power
lower than its ST power constraint, while all the other active users
transmit with their maximum powers.

Furthermore, from Lemma \ref{lemma:opt 2 ST LT}, we can derive the
conditions for the optimality of D-TDMA  at any fading state, which
are stated in the following theorem. Again, without loss of
generality, we use $\mu_m$'s instead of their optimal dual solutions
for Problem \ref{prob:ST LT MAC} in expressing these conditions.
\begin{theorem}\label{theorem:TDMA opt conditions ST LT}
D-TDMA is optimal at an arbitrary fading state for achieving the
ergodic sum capacity of the fading C-MAC under the ST-TPC jointly
with the LT-IPC if and only if user $\pi(1)$ satisfies
\begin{equation}\label{eq:TDMA opt ST LT}
1+h_{\pi(1)}P_{\pi(1)}^{\rm ST}\geq
\frac{h_{\pi(2)}}{\sum_{m=1}^M\mu_m g_{\pi(2)m}}.
\end{equation}
User $\pi(1)$ is then selected for transmission and its optimal
transmit power is
\begin{equation}\label{eq:TDMA opt power ST LT}
p_{\pi(1)}^*=\min\left(P_{\pi(1)}^{\rm ST},
\left(\frac{1}{\sum_{m=1}^M\mu_mg_{\pi(1)m}}-\frac{1}{h_{\pi(1)}}\right)^+
\right).
\end{equation}
\end{theorem}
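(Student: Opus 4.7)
The plan is to reduce the theorem to a direct corollary of Lemma \ref{lemma:opt 2 ST LT}, which already enumerates the active set $\mathcal{I}=\{\pi(1),\ldots,\pi(|\mathcal{I}|)\}$ and specifies the transmit powers. By definition, D-TDMA is optimal at a given fading state precisely when $|\mathcal{I}|\leq 1$, in which case the lone transmitting user (if any) must be $\pi(1)$. So it suffices to show that $|\mathcal{I}|\leq 1$ is equivalent to the single inequality (\ref{eq:TDMA opt ST LT}), and then to verify that the formula (\ref{eq:TDMA opt power ST LT}) agrees with the specialization of (\ref{eq:opt power ST LT}) to $|\mathcal{I}|\leq 1$.

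The key observation I would make first is a monotonicity argument on the activation test in Lemma \ref{lemma:opt 2 ST LT}. The left-hand side $h_{\pi(x)}/\sum_{m}\mu_m g_{\pi(x)m}$ is non-increasing in $x$ by the very definition of $\pi$, while the right-hand side $1+\sum_{b=1}^{x-1}h_{\pi(b)}P_{\pi(b)}^{\rm ST}$ is non-decreasing in $x$. Consequently, once the strict inequality fails at some $x$, it fails for every $x'\geq x$. This collapses the statement ``$|\mathcal{I}|$ is the largest $x$ for which the test holds'' into ``$|\mathcal{I}|+1$ is the smallest $x$ for which the test fails''. In particular, $|\mathcal{I}|\leq 1$ holds if and only if the test fails at $x=2$, i.e.,
\begin{equation*}
\frac{h_{\pi(2)}}{\sum_{m=1}^M\mu_m g_{\pi(2)m}}\leq 1+h_{\pi(1)}P_{\pi(1)}^{\rm ST},
\end{equation*}
which is exactly (\ref{eq:TDMA opt ST LT}). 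This single condition covers both subcases $|\mathcal{I}|=1$ and $|\mathcal{I}|=0$ uniformly, since in the latter case the test fails at $x=1$ and so by monotonicity also at $x=2$.

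It remains to match the power formula. When $|\mathcal{I}|=1$, Lemma \ref{lemma:opt 2 ST LT} yields $p_{\pi(1)}^*=\min\!\bigl(P_{\pi(1)}^{\rm ST}, (h_{\pi(1)}/\sum_m\mu_m g_{\pi(1)m}-1)/h_{\pi(1)}\bigr)$, which I would rewrite as $\min\!\bigl(P_{\pi(1)}^{\rm ST}, 1/\sum_m\mu_m g_{\pi(1)m}-1/h_{\pi(1)}\bigr)$; the bracketed second argument is positive because the activation test holds at $x=1$, so inserting $(\cdot)^+$ has no effect. When $|\mathcal{I}|=0$, the test fails at $x=1$, i.e., $h_{\pi(1)}/\sum_m\mu_m g_{\pi(1)m}\leq 1$, which is the same as $1/\sum_m\mu_m g_{\pi(1)m}-1/h_{\pi(1)}\leq 0$, so the $(\cdot)^+$ kills the second argument and the minimum equals $0=p_{\pi(1)}^*$. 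In both cases the formula (\ref{eq:TDMA opt power ST LT}) is recovered.

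There is no genuine technical obstacle beyond bookkeeping; the only point that requires care is the unified treatment of $|\mathcal{I}|=0$ and $|\mathcal{I}|=1$, for which the monotonicity observation above is essential. Everything else is a straightforward unpacking of Lemma \ref{lemma:opt 2 ST LT}, which in turn rests on the KKT conditions already laid out in (\ref{eq:KKT1 ST LT})--(\ref{eq:KKT4 ST LT}).
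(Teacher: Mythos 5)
Your proposal is correct and follows essentially the same route as the paper, which likewise derives the theorem directly from Lemma \ref{lemma:opt 2 ST LT} by noting that $|\mathcal{I}|\leq 1$ holds if and only if (\ref{eq:TDMA opt ST LT}) does, and then merges the $|\mathcal{I}|=0$ and $|\mathcal{I}|=1$ cases into the single power formula (\ref{eq:TDMA opt power ST LT}). Your explicit monotonicity argument (left-hand side non-increasing, right-hand side non-decreasing in $x$) is exactly the fact the paper leaves implicit here, having already invoked it inside the proof of Lemma \ref{lemma:opt 2 ST LT}.
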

\begin{proof}
From Lemma \ref{lemma:opt 2 ST LT}, it follows that D-TDMA is
optimal, i.e., $|\mathcal{I}|\leq 1$, occurs if and only if
(\ref{eq:TDMA opt ST LT}) holds. Then, (\ref{eq:TDMA opt power ST
LT}) is obtained from Lemma \ref{lemma:opt 2 ST LT} by combining the
cases of $|\mathcal{I}|=0$ and $|\mathcal{I}|=1$ .
\end{proof}

\begin{remark}
In the case of the traditional fading SISO-MAC with the user ST-TPC
given in (\ref{eq:ST TPC MAC}), but without the LT-IPC given in
(\ref{eq:LT IPC MAC}), it can be easily verified that the ergodic
sum capacity is achieved when all users transmit with their maximum
available power values given by $P_k^{\rm ST}$'s at each fading
state. This is consistent with the results obtained in (\ref{eq:opt
power ST LT}) by having all $\mu_m$'s associated with the LT-IPC
take zero values. With zero $\mu_m$'s, it can be easily verified
that the condition given in Theorem \ref{theorem:TDMA opt conditions
ST LT} is never satisfied, and thus D-TDMA cannot be optimal in this
special case.
\end{remark}

\subsection{Short-Term Transmit-Power and Interference-Power Constraints}

The ergodic sum capacity under both the ST-TPC and ST-IPC can be
obtained by solving the following optimization problem:
\begin{problem}\label{prob:ST ST MAC}
\begin{eqnarray*}
\mathop{\mathtt{Max.}}_{\{p_k(\mv{\alpha})\}}&& \mathbb{E}\left[
\log\left(1+\sum_{k=1}^Kh_kp_k(\mv{\alpha})\right)\right]
\\ \mathtt {s.t.} && (\ref{eq:ST TPC MAC}), (\ref{eq:ST IPC
MAC}).
\end{eqnarray*}
\end{problem}
Notice that this case differs from all three previous cases in that
all of its power constraints are short-term constraints and thus
separable over fading states. Therefore, we can decompose the
original problem into individual subproblems each for one fading
state. For conciseness, we drop again the $\mv{\alpha}$ and express
the rate maximization problem at a particular fading state as
\begin{problem}\label{prob:ST ST MAC per state}
\begin{eqnarray}
\mathop{\mathtt{Max.}}_{\{p_k\}} &&
\log\left(1+\sum_{k=1}^Kh_kp_k\right)
\\ \mathtt {s.t.} && \label{eq:constraint 1 ST ST}
p_k\leq P_k^{\rm ST}, \ \forall k
\\ && \sum_{k=1}^Kg_{km}p_k\leq \Gamma^{\rm ST}_m, \ \forall m \label{eq:constraint 2 ST ST}
\\ && p_k\geq 0, \ \forall k. \label{eq:constraint 3 ST ST}
\end{eqnarray}
\end{problem}
The above problem is convex, but in general does not have a
closed-form solution. Similar to Problem \ref{prob:LT ST MAC per
state}, the interior point method \cite{Boydbook} or the Lagrange
duality method can be used to solve this problem and thus we omit
the details here.

For this case, we next present in the following theorem the
conditions for D-TDMA to be optimal at an arbitrary fading state:
\begin{theorem}\label{theorem:TDMA opt conditions ST ST}
D-TDMA is optimal at an arbitrary fading state for achieving the
ergodic sum capacity of the fading C-MAC under the ST-TPC jointly
with the ST-IPC if and only if there exists one user $i$ (the user
that transmits) that satisfies both of the following two conditions.
Let $j$ be any of the other users, $j\in\{1,\ldots,K\}, j\neq i$,
and $m'=\arg\min_{m\in\{1,\ldots,M\}}\frac{\Gamma_m^{\rm
ST}}{g_{im}}$.
\begin{itemize}
\item $\frac{\Gamma_i^{\rm ST}}{g_{im'}}\leq P_i^{\rm ST}$;
\item $\frac{h_i}{g_{im'}}\geq \frac{h_j}{g_{jm'}}, \forall j\neq
i$.
\end{itemize}
The optimal transmit power of user $i$ is $p_i^*=\frac{\Gamma_i^{\rm
ST}}{g_{im'}}$.
\end{theorem}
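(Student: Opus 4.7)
The plan is to proceed in the same KKT-based fashion used earlier for Theorems \ref{theorem:TDMA opt conditions LT ST} and \ref{theorem:TDMA opt conditions ST LT}. First, introduce nonnegative dual variables $\lambda_k$ for (\ref{eq:constraint 1 ST ST}), $\mu_m$ for (\ref{eq:constraint 2 ST ST}), and $\delta_k$ for (\ref{eq:constraint 3 ST ST}), and write down the KKT conditions for Problem \ref{prob:ST ST MAC per state}. Since the problem is convex with linear constraints, a feasible primal--dual pair $(\{p_k^*\},\{\lambda_k^*\},\{\mu_m^*\},\{\delta_k^*\})$ is optimal if and only if it satisfies stationarity, primal feasibility, dual feasibility, and complementary slackness.

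For sufficiency, I would assume the two stated conditions hold for some user $i$ and explicitly construct a KKT point with $p_i^* = \Gamma_{m'}^{\rm ST}/g_{im'}$ and $p_k^* = 0$ for $k \neq i$. The first condition $\Gamma_{m'}^{\rm ST}/g_{im'} \leq P_i^{\rm ST}$ guarantees primal feasibility of (\ref{eq:constraint 1 ST ST}) for $k=i$, while the choice of $m'$ ensures (\ref{eq:constraint 2 ST ST}) holds for every $m$ with equality at $m'$. I would set $\lambda_k^*=0$ for all $k$, $\mu_m^*=0$ for $m\neq m'$, and $\delta_i^*=0$; stationarity for user $i$ then fixes $\mu_{m'}^* = h_i / \bigl(g_{im'}(1+h_i p_i^*)\bigr) > 0$, which is consistent with complementary slackness since the IPC at $m'$ is tight. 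For $j \neq i$, stationarity gives $\delta_j^* = \mu_{m'}^* g_{jm'} - h_j/(1+h_i p_i^*)$, which is nonnegative precisely because the second condition $h_i/g_{im'}\geq h_j/g_{jm'}$ holds, as a direct algebraic rearrangement shows.

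For necessity, I would assume D-TDMA is optimal with user $i$ as the active transmitter and derive both conditions from the KKT system. Complementary slackness on the IPCs forces $\mu_m^* = 0$ for every $m$ at which the IPC is slack; in particular, if $p_i^* < \Gamma_{m'}^{\rm ST}/g_{im'}$, then no IPC is binding at user $i$, so all $\mu_m^*=0$, and stationarity for any $j\neq i$ would then yield $\delta_j^* = -h_j/(1+h_ip_i^*) + \lambda_j^*$, which combined with $\delta_j^*\geq 0$ and $\lambda_j^*(p_j^*-P_j^{\rm ST})=0$ (hence $\lambda_j^*=0$) gives a contradiction since $h_j > 0$. Thus $p_i^* = \min(P_i^{\rm ST},\Gamma_{m'}^{\rm ST}/g_{im'})$ and the IPC must be the binding bound, giving the first condition. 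Then, with at most $\mu_{m'}^*$ nonzero (any other $\mu_m^*>0$ would force $p_i^*=\Gamma_m^{\rm ST}/g_{im}$, violating the definition of $m'$), stationarity at user $i$ determines $\mu_{m'}^*$, and nonnegativity of $\delta_j^*$ for every $j \neq i$ (with $\lambda_j^*=0$ since $p_j^*=0<P_j^{\rm ST}$) yields exactly $h_i/g_{im'}\geq h_j/g_{jm'}$.

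The main obstacle I anticipate is the bookkeeping around which IPCs can be simultaneously tight: ruling out $\mu_m^*>0$ for $m\neq m'$ requires the observation that if two IPCs were tight under a single-user solution, $p_i^*$ would have to equal both $\Gamma_m^{\rm ST}/g_{im}$ and $\Gamma_{m'}^{\rm ST}/g_{im'}$, which generically fails, and in the boundary case only the index $m'$ needs to be used in stating the optimal power. Once this reduction is made, the remainder is a routine verification that the stationarity equations at user $i$ and at every other user $j$ translate one-to-one into the two inequalities in the theorem statement.
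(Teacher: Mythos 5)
Your proposal matches the paper's own proof in Appendix G essentially step for step: the same KKT system for Problem \ref{prob:ST ST MAC per state}, the same argument that exactly one multiplier $\mu_{m'}^*$ is strictly positive (more than one tight IPC fails generically, all zero forces $h_j=0$), and the same extraction of the two conditions from complementary slackness and stationarity at users $i$ and $j$. The only difference is cosmetic: you make the sufficiency direction explicit by constructing the dual certificate, whereas the paper simply invokes the sufficiency of the KKT conditions for a convex problem.
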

\begin{proof}
Please refer to Appendix \ref{appendix:proof TDMA opt conditions ST
ST}.
\end{proof}

\section{Ergodic Sum Capacity for Fading Cognitive BC}
\label{sec:BC}

From (\ref{eq:sum capacity BC new}), the ergodic sum capacities for
the SISO fading C-BC under different mixed TPC and IPC constraints
can be obtained as the optimal values of the following optimization
problems:
\begin{problem}\label{prob:BC}
\begin{align}
\mathop{\mathtt{Max.}}_{\{p_k(\mv{\beta})\}}&~ \mathbb{E}\left[ \log\left(1+\sum_{k=1}^Kh_kp_k(\mv{\beta})\right)\right] \nonumber
\\ \mathtt {s.t.} &~ (\ref{eq:LT TPC BC}), (\ref{eq:LT IPC
BC}) \ ({\rm Case \ I: \ LT-TPC \ and \ LT-IPC}) \nonumber \\ &~ {\rm or} ~ (\ref{eq:LT TPC BC}), (\ref{eq:ST IPC BC}) \ ({\rm Case \ II: LT-TPC
\  and \ ST-IPC}) \nonumber \\ &~ {\rm or} ~
 (\ref{eq:ST TPC BC}), (\ref{eq:LT IPC BC}) \ ({\rm Case \ III: \
ST-TPC \ and \ LT-IPC}) \nonumber
\\ &~ {\rm or} ~  (\ref{eq:ST TPC BC}), (\ref{eq:ST IPC BC}) \ ({\rm Case \ IV:
\ ST-TPC \ and \ ST-IPC}). \nonumber
\end{align}
\end{problem}

Notice that in (\ref{eq:LT TPC BC})-(\ref{eq:ST IPC BC}), the
transmit power of the secondary BS at a given fading state,
$q(\mv{\beta})$, needs to be replaced by the user sum-power in the
dual C-MAC, $\sum_{k=1}^Kp_k(\mv{\beta})$. Compared with the
problems addressed in Section \ref{sec:MAC} for the C-MAC, it is
easy to see that the corresponding problems in the C-BC case are
very similar, e.g., both have the same objective function, and
similar affine constraints in terms of $p_k(\mv{\alpha})$'s or
$p_k(\mv{\beta})$'s. Thus, we skip the details of derivations and
present the results directly in the following theorem:

\begin{theorem}\label{theorem:TDMA opt conditions BC}
In each of Cases I-IV, D-TDMA is optimal across all the fading
states for achieving the ergodic sum capacity of the fading C-BC. In
each case, the user $i$ with the largest $h_i$ among all the users
should be selected for transmission at a particular fading state.
The optimal rule for assigning the transmit power of the BS at each
fading state (for conciseness $\mv{\beta}$ is dropped in the
following expressions) in each case is given below. Let $j$ be any
of the users other than $i$, $j\in\{1,\ldots,K\}, j\neq i$;
$m'=\arg\min_{m\in\{1,\ldots,M\}}\frac{\Gamma_m^{\rm ST}}{f_m}$; and
$\lambda$ and $\mu_m$'s are the optimal dual variables associated
with the LT-TPC in (\ref{eq:LT TPC BC}) and the LT-IPC in
(\ref{eq:LT IPC BC}), respectively, if they appear in any of the
following cases.
\begin{itemize}
\item Case I:
\begin{equation}\label{eq:BC power opt 1}
q^*=\left(\frac{1}{\lambda+\sum_{m=1}^M\mu_mf_m}-\frac{1}{h_i}\right)^+;
\end{equation}

\item Case II:
\begin{equation}\label{eq:BC power opt 2}
q^*=\min\left(\frac{\Gamma_{m'}^{\rm ST}}{f_{m'}},
\left(\frac{1}{\lambda}-\frac{1}{h_i}\right)^+\right);
\end{equation}

\item Case III:
\begin{equation}\label{eq:BC power opt 3}
q^*=\min\left(Q^{\rm ST},
\left(\frac{1}{\sum_{m=1}^M\mu_mf_m}-\frac{1}{h_i}\right)^+\right);
\end{equation}

\item Case IV:
\begin{equation}
q^*=\min\left(Q^{\rm ST}, \frac{\Gamma_{m'}^{\rm
ST}}{f_{m'}}\right).
\end{equation}

\end{itemize}
\end{theorem}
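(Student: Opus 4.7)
The approach will be to invoke the equivalence (14) between the fading C-BC and an auxiliary fading C-MAC in which $g_{km} = f_m$ for every $k$, and then to exploit this user-independent interference gain structure to collapse the multiuser problem into a single-user water-filling problem at each fading state. This symmetry is the fundamental reason why, unlike the C-MAC, D-TDMA is \emph{always} optimal for the C-BC in all four cases, and why the power allocations (40)--(43) take a single-user WF form with the selected user being $i = \arg\max_k h_k$.

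Concretely, for Case~I I would form the Lagrangian of Problem~4.1 with a single dual variable $\lambda \ge 0$ for the BS-side LT-TPC (which in the auxiliary C-MAC reads $\mathbb{E}[\sum_k p_k(\mv{\beta})] \le Q^{\mathrm{LT}}$) and $\mu_m \ge 0$ for each LT-IPC. Because $g_{km}=f_m$ does not depend on $k$, the interference term factors as $\sum_m \mu_m f_m \cdot \sum_k p_k$, so after defining $\kappa := \lambda + \sum_m \mu_m f_m$ the per-state dual subproblem simplifies to
\begin{equation}
\max_{p_k \ge 0}\ \log\!\Bigl(1+\sum_{k=1}^K h_k p_k\Bigr) - \kappa \sum_{k=1}^K p_k .
\end{equation}
For any fixed total power $P = \sum_k p_k$, the linear functional $\sum_k h_k p_k$ on the simplex $\{p_k \ge 0 : \sum_k p_k = P\}$ is maximized at the vertex $p = P e_i$ with $i = \arg\max_k h_k$. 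This simultaneously establishes D-TDMA and the selection rule, and reduces the problem to the scalar WF $\max_{P \ge 0} \log(1 + h_i P) - \kappa P$ whose solution is exactly (40). Strong duality (Slater's condition holds for the convex Problem~4.1) then lets me substitute the optimal $\lambda^*, \mu_m^*$.

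Cases~II--IV follow the same template. In Case~II only the LT-TPC is dualized (giving $\kappa = \lambda$), and the remaining ST-IPC $f_m \sum_k p_k \le \Gamma_m^{\mathrm{ST}}$ becomes the single scalar cap $\sum_k p_k \le \Gamma_{m'}^{\mathrm{ST}}/f_{m'}$ with $m' = \arg\min_m \Gamma_m^{\mathrm{ST}}/f_m$, so the per-state problem is capped WF and yields (41). In Case~III only the LT-IPCs are dualized ($\kappa = \sum_m \mu_m f_m$), and the remaining ST-TPC becomes the cap $\sum_k p_k \le Q^{\mathrm{ST}}$, yielding (42). Case~IV has only ST constraints, so no dualization is needed; the per-state problem is the maximization of an increasing function of $\sum_k h_k p_k$ subject to both caps $\sum_k p_k \le Q^{\mathrm{ST}}$ and $\sum_k p_k \le \Gamma_{m'}^{\mathrm{ST}}/f_{m'}$, again solved by concentrating all power on the largest-$h_i$ user at the tighter of the two caps, giving (43). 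The only nontrivial step common to every case is the simplex-vertex argument identifying $\arg\max_k h_k$; all remaining computation can be read off from Lemma~3.2 and Theorems~3.2--3.3 after replacing $\lambda_k$ by the common $\lambda$ (or dropping it) and $g_{km}$ by $f_m$, so no new technical obstacle arises beyond bookkeeping of which constraints are dualized in which case.
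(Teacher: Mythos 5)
Your proposal is correct, and it actually supplies a proof where the paper supplies none: the paper simply asserts that the C-BC problems are ``very similar'' to those of Section III and skips the derivations. The implicit argument the paper has in mind is to rerun the C-MAC machinery (Lemmas 3.1--3.6, Theorems 3.2--3.4) on the auxiliary C-MAC with $g_{km}=f_m$ and a common dual variable $\lambda$; there, D-TDMA optimality rests on KKT stationarity plus ``this equality holds with probability zero'' genericity arguments, and the ST cases require a case analysis over which dual variables are active. Your simplex-vertex observation --- that once the interference gains are user-independent every per-state subproblem depends on $\{p_k\}$ only through $\sum_k p_k$ and $\sum_k h_k p_k$, so for fixed total power the linear functional $\sum_k h_k p_k$ is maximized at the vertex $i=\arg\max_k h_k$ --- is a cleaner and more robust route: it is deterministic (no probability-zero tie-breaking needed), it establishes D-TDMA and the selection rule in one stroke for all four cases, and it correctly handles Case III, where a literal transplant of Theorem 3.3 would not apply because the BC's ST-TPC is a single sum-power cap $\sum_k p_k\le Q^{\rm ST}$ rather than individual caps $p_k\le P_k^{\rm ST}$. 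The resulting scalar capped water-filling problems reproduce (40)--(43) exactly. The only step you gloss over is that the partial dualization in Cases II and III (dualizing only the LT constraints while keeping the ST constraints in the inner maximization) yields a zero duality gap; this follows from convexity and Slater's condition exactly as argued for the C-MAC problems in Section III, so it is bookkeeping rather than a gap.
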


\begin{remark}
In the case of the traditional fading SISO-BC without the LT- or
ST-IPC, by combining the results in \cite{Knopp95} for the fading
SISO-MAC and the MAC-BC duality results in \cite{Goldsmith04}, it
can be inferred that it is optimal to deploy D-TDMA by transmitting
to the user with the largest $h_i$ at each time in terms of
maximizing the ergodic sum capacity, regardless of the LT- or ST-TPC
at the BS. Theorem \ref{theorem:TDMA opt conditions BC} can thus be
considered as the extensions of such result to the SISO fading C-BC
under the additional LT- or ST-IPC. Also notice that the optimal
power allocation strategies in (\ref{eq:BC power opt 1})-(\ref{eq:BC
power opt 3}) resemble the well-known
 ``water-filling (WF)'' solutions for the single-user fading
channels \cite{Coverbook}, \cite{Goldsmith97}.
\end{remark}

\section{Numerical Examples}\label{sec:numerical results}

In this section, we present numerical results on the performances of
the proposed multiuser DRA schemes for some example fading CR
networks under different mixed transmit-power and interference-power
constraints, namely: Case I: LT-TPC with (w/) LT-IPC; Case II:
LT-TPC w/ ST-IPC; Case III: ST-TPC w/ LT-IPC; and Case IV: ST-TPC w/
ST-IPC. For simplicity, we consider symmetric multiuser channels
where all channel complex coefficients are independent CSCG random
variables distributed as $\mathcal{CN}(0,1)$. In total, $10,000$
randomly generated channel power gain vectors for $\mv{\alpha}$ or
$\mv{\beta}$ are used to approximate the actual ergodic sum-rate of
the secondary network in each simulation result. Furthermore, we
assume that the TPC (LT or ST) values are identical for all SUs, and
the IPC (LT or ST) values are identically equal to one, the same as
the additive Gaussian noise variance, at all PRs. For convenience,
we use $P$ to stand for all $P^{\rm ST}_k$'s and $P^{\rm LT}_k$'s,
$Q$ for both $Q^{\rm ST}$ and $Q^{\rm LT}$, and $\Gamma$ for all
$\Gamma^{\rm ST}_m$'s and $\Gamma^{\rm LT}_m$'s. The simulation
results are presented in the following subsections.

\subsection{Effects of LT/ST TPC/IPC on Ergodic Sum Capacity}

First, we compare the achievable ergodic sum capacities for the
fading CR network under four different cases of mixed TPC and IPC.
Fig. \ref{fig:MAC} shows the results for the fading C-MAC with $K=2$
and $M=1$, and Fig. \ref{fig:BC} for the fading C-BC with $K=5$ and
$M=2$.

For the C-MAC case, it is observed in Fig. \ref{fig:MAC} that the
ergodic sum capacity $C_{\rm MAC}$ in Case I is always the largest
while that in Case IV is the smallest for any given SU transmit
power constraint $P$. This is as expected since both the ST-TPC and
ST-IPC are less favorable from the SU's perspective as compared to
their LT counterparts: The former one imposes more stringent power
constraints than the latter one over the DRA in the SU network. It
is also observed that as $P$ increases, eventually $C_{\rm MAC}$
becomes saturated as the IPC (LT or ST) becomes more dominant than
the TPC. On the other hand, for small values of $P$ where the TPC is
more dominant than the IPC, it is observed that the LT-TPC (where
D-TDMA is optimal in Case I and close to being optimal in Case II)
leads to a capacity gain over the ST-TPC (where D-TDMA is
non-optimal in Case III or IV) due to the well-known {\it multiuser
diversity} effect exploited by D-TDMA \cite{Tse02}. Furthermore,
$C_{\rm MAC}$ in Case II is observed to be initially larger than
that in Case III for small values of $P$, but becomes equal to and
eventually smaller than that in Case III as $P$ increases. This is
due to the facts that for small values of $P$, TPC dominates IPC and
furthermore LT-TPC is more flexible over ST-TPC; while for large
values of $P$, IPC becomes more dominant over TPC and LT-IPC is more
flexible over ST-IPC.

For the C-BC case, similar results like those in the C-MAC are
observed. However, there exists one quite different phenomenon for
the C-BC. As the secondary BS transmit power $Q$ becomes large, the
achievable ergodic sum capacity $C_{\rm BC}$ shown in Fig.
\ref{fig:BC} under the LT-IPC is much larger than that under the
ST-IPC, regardless of the LT- or ST-TPC, as compared with $C_{\rm
MAC}$ shown in Fig. \ref{fig:MAC}. This is due to the fact that for
the C-BC with $M=2$ and a single BS transmitter, the ST-IPC can
limit the transmit power of the secondary BS more stringently than
the case of C-MAC shown in Fig. \ref{fig:MAC}, where there are two
SU transmitters but only a single PR. Since it is not always the
case that both channels from the two SUs to the PR have very large
gains at a given time, in the C-MAC case the SU with the smallest
instantaneous channel gain to the PR can be selected for
transmission, i.e., there exists an interesting {\it new form of
multiuser diversity effect} in the fading C-MAC. In contrast, for
the C-BC, the BS is likely to transmit with large power only if both
channel gains from the BS to the two PRs are reasonably low.

\begin{figure}
\begin{center}
\scalebox{0.5}{\includegraphics{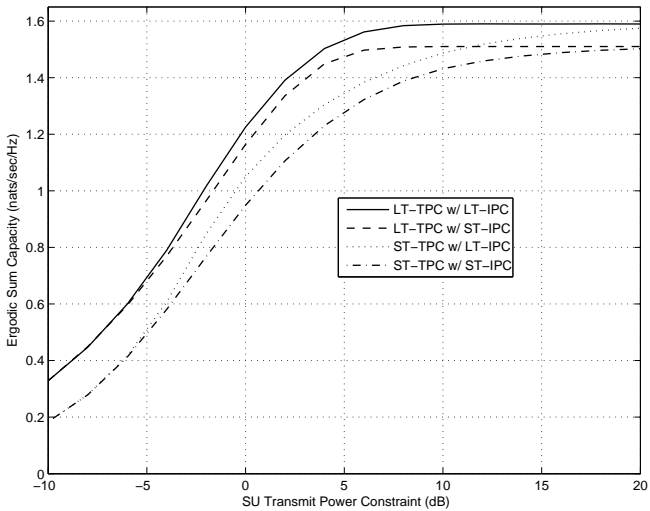}}
\end{center}
\caption{Comparison of the ergodic sum capacity under different combinations of TPC and IPC for the fading C-MAC with $K=2$,
$M=1$.}\label{fig:MAC}
\end{figure}

\begin{figure}
\begin{center}
\scalebox{0.5}{\includegraphics{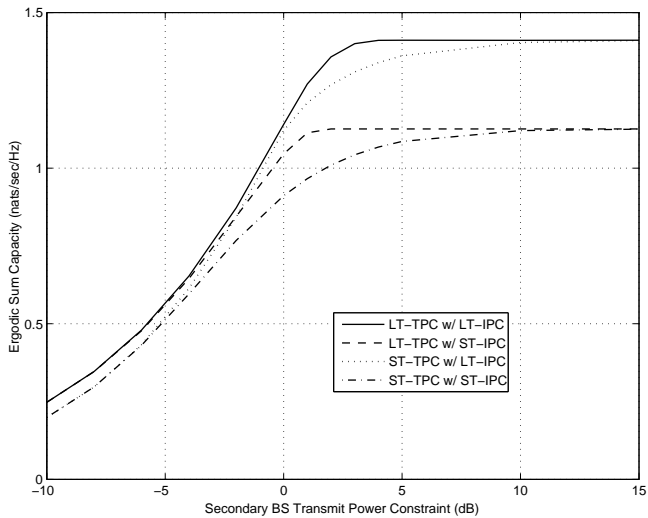}}
\end{center}
\caption{Comparison of the ergodic sum capacity under different combinations of TPC and IPC for the fading C-BC with $K=5$,
$M=2$.}\label{fig:BC}
\end{figure}

\subsection{Fading C-MAC With (w/) vs. Without (w/o) TDMA
Constraint}

Next, we consider the fading C-MAC and examine the effect of the
TDMA constraint on its achievable ergodic sum capacity. Notice that
for the fading C-BC, it has been shown in Theorem \ref{theorem:TDMA
opt conditions BC} that D-TDMA is optimal for all cases of mixed TPC
and IPC; and for the fading C-MAC, it has also been shown in Theorem
\ref{theorem:TDMA opt LT LT} that D-TDMA is optimal in Case I.
Therefore, in this subsection, we only consider the fading C-MAC in
Cases II, III, and IV. We compare the ergodic sum capacity $C_{\rm
MAC}$ achievable in each of these cases via the optimal DRA rule
proposed in this paper w/o the TDMA constraint against that with an
explicit TDMA constraint, i.e., at most one SU is selected for
transmission at any time. However, for the cases with the explicit
TDMA constraint, we still allow DRA over the SU network to optimally
select the SU (i.e., using D-TDMA) and set its power level for
transmission at each fading state, so as to maximize the long-term
average sum-rate. For conciseness, we discuss the optimal DRA
schemes for the fading C-MAC under the explicit TDMA constraint in
Appendix \ref{appendix:TDMA}.

In Figs. \ref{fig:MAC TDMA 2} and \ref{fig:MAC TDMA}, we compare the
achievable $C_{\rm MAC}$ w/ vs. w/o the TDMA constraint for Cases
II-IV with $K=2$, $M=1$, and $K=4, M=2$, respectively. It is
observed in both figures that the achievable $C_{\rm MAC}$ in each
case of mixed TPC and IPC is larger without the TDMA constraint.
This is as expected since TDMA is an additional constraint that
limits the flexibility of DRA in the SU network.

In Fig.  \ref{fig:MAC TDMA 2}, it is observed that the gap between
the achievable $C_{\rm MAC}$'s w/ and w/o the TDMA constraint in
each of Cases II-IV diminishes as the SU transmit power constraint
$P$ becomes sufficiently large. This phenomenon can be explained as
follows. First, note that as $P$ increases, eventually the TPC will
become inactive and the IPC becomes the only active power constraint
in each case. As a result, Case II and Case IV only have the (same)
ST-IPC and Case III only has the LT-IPC as active constraints. Thus,
the observed phenomenon is justified since D-TDMA has been shown to
be optimal for the above two cases, according to Corollary
\ref{corol:LT ST} (notice that $M=1$ for Fig. \ref{fig:MAC TDMA 2})
and Theorem \ref{theorem:TDMA opt LT LT} (with all $\lambda_k$'s
taking a zero value), respectively. However, in Fig. \ref{fig:MAC
TDMA} with $M>1$, only Case III has the same converged $C_{\rm MAC}$
w/ and w/o the TDMA constraint as $P$ becomes large, according to
Theorem \ref{theorem:TDMA opt LT LT}. In general, the capacity gap
between cases w/ and w/o the TDMA constraint becomes larger as $K$
or $M$ increases, as observed by comparing Figs. \ref{fig:MAC TDMA
2} and \ref{fig:MAC TDMA}. For example, for Case II, in Fig.
\ref{fig:MAC TDMA 2} with $M=1$, the capacity gap is negligible for
all values of $P$, which is consistent with Lemma \ref{lemma:TDMA
opt LT ST}; but it becomes notably large in Fig. \ref{fig:MAC TDMA}
with $M=2$.

\begin{figure}
\begin{center}
\scalebox{0.5}{\includegraphics{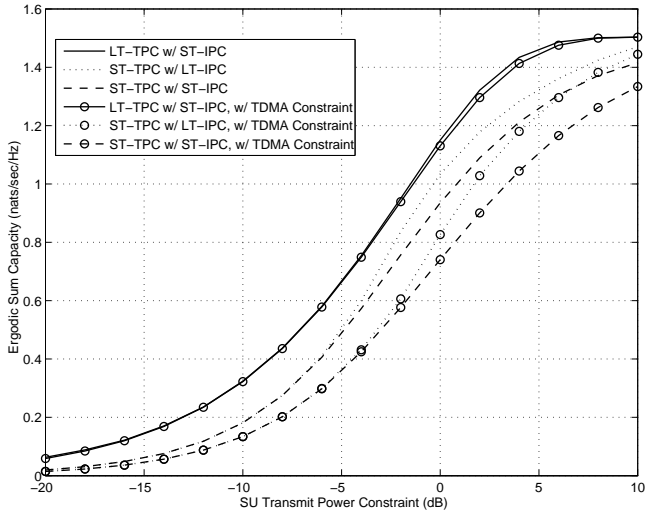}}
\end{center}
\caption{Comparison of the ergodic sum capacity w/ vs. w/o the TDMA constraint for the fading C-MAC with $K=2$, $M=1$.}\label{fig:MAC TDMA 2}
\end{figure}

\begin{figure}
\begin{center}
\scalebox{0.5}{\includegraphics{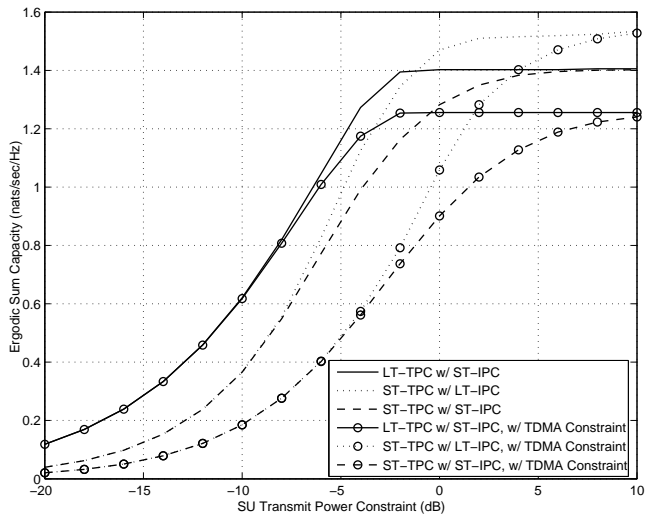}}
\end{center}
\caption{Comparison of the ergodic sum capacity w/ vs. w/o the TDMA constraint for the fading C-MAC with $K=4$, $M=2$.}\label{fig:MAC TDMA}
\end{figure}

\subsection{Dynamic vs. Fixed Resource Allocation}

At last, we compare the ergodic sum capacity achievable with the
optimal DRA against the achievable average sum-rate of users via
some heuristic fixed resource allocation (FRA) schemes for the same
fading CR network. For DRA, we select the most flexible power
allocation scheme for the SU network under the LT-TPC and the LT-IPC
(i.e., Case I), which is D-TDMA based and gives the largest $C_{\rm
MAC}$ and $C_{\rm BC}$ among all cases of mixed power constraints
under the same power-constraint values $P$ ($Q$) and $\Gamma$ for
the fading C-MAC (C-BC). For FRA, we also consider TDMA, which uses
the simple ``round-robin'' user scheduling rule, under the ST-TPC
and the ST-IPC. More specifically, for the fading C-MAC, at each
time the SU, say user $i$, which is scheduled for transmission, will
transmit a power equal to $\min(P,\frac{\Gamma}{\max_{m}g_{im}})$,
while for the fading C-BC, the BS transmits with the power equal to
$\min(Q,\frac{\Gamma}{\max_{m}f_{m}})$. Notice that the considered
FRA can be much more easily implemented as compared to the proposed
optimal DRA. Therefore, we need to examine the capacity gains by the
optimal DRA over the FRA.

In Fig. \ref{fig:MAC comp}, capacity comparisons between DRA and FRA
are shown for the fading C-MAC with $K=2$ or $4$, and $M=2$. Notice
that for the DRA case we have normalized the SU LT-TPC for $K=4$ by
a factor of $2$ such that the sum of user transmit power constraints
for both $K=2$ and $K=4$ are identical. Furthermore, for fair
comparison between DRA and FRA, the SU ST-TPC values in the FRA case
are $4$ and $2$ times the LT-TPC value in the DRA for $K=4$ and
$K=2$, respectively. It is observed that DRA achieves substantial
throughput gains over FRA for both $K=2$ and $K=4$. Notice that for
FRA, it can be easily shown that with the user power normalization,
the average sum-rate is statistically independent of $K$.
Furthermore, multiuser diversity gains in the achievable ergodic
sum-rate for the DRA are also observed by comparing $K=4$ against
$K=2$, given the same sum of user power constraints.

In Fig. \ref{fig:BC comp}, we show the capacity comparisons between
the fading C-BC with DRA and that with FRA, for a fixed secondary BS
transmit power constraint $Q=3$dB, $M=1$ or $4$, and different
values of $K$. Since there is only one transmitter at the BS for the
C-BC, there is no user power normalization required as in the C-MAC
case. The capacity gains by DRA over FRA are observed to become more
significant for both $M=1$ and $M=4$ cases, as $K$ increases, due to
the multiuser diversity effect. As an example, at $K=20$, the
capacities with DRA are $2.75$ and $3.83$ times of that with FRA,
for $M=1$ and $M=4$, respectively. This suggests that in contrast to
the conventional fading BC without any IPC, the multiuser diversity
gains obtained by the optimal DRA become more crucial to the fading
C-BC as the number of PRs, $M$, becomes larger.

\begin{figure}
\begin{center}
\scalebox{0.5}{\includegraphics{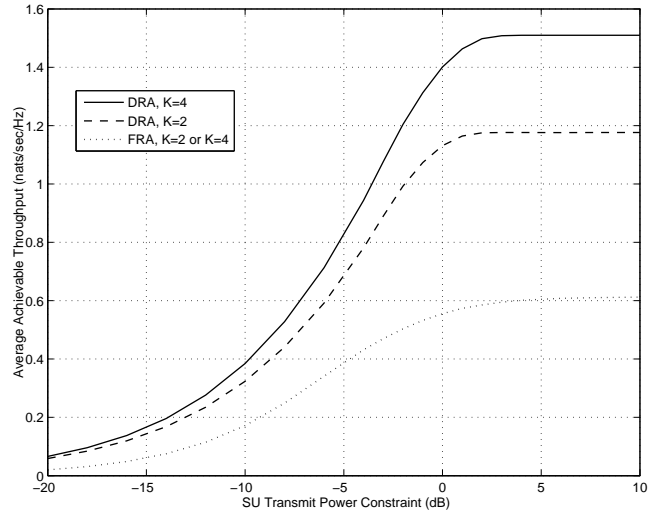}}
\end{center}
\caption{Comparison of the average achievable throughput with DRA
vs. with FRA for the fading C-MAC with $K=2$ or $4$,
$M=2$.}\label{fig:MAC comp}
\end{figure}

\begin{figure}
\begin{center}
\scalebox{0.5}{\includegraphics{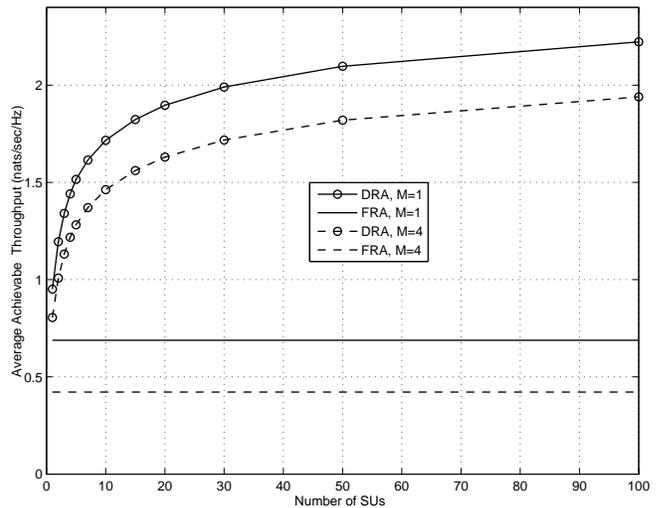}}
\end{center}
\caption{Comparison of the average achievable throughput with DRA
vs. with FRA for the fading C-BC with $M=1$ or $4$, and $Q^{\rm
LT}=Q^{\rm ST}=3$dB.} \label{fig:BC comp}
\end{figure}

\section{Concluding Remarks}\label{sec:conclusions}

In this paper, we have studied the information-theoretic limits of
the CR network under wireless spectrum sharing with an existing
primary radio network. By applying the interference-power constraint
as a practical means to protect each primary link, we characterize
the achievable ergodic sum capacity of the fading C-MAC and C-BC
under different mixed LT-/ST-TPC and LT-/ST-IPC. Optimal DRA schemes
for both cases w/ and w/o a TDMA constraint are presented.
Interestingly, except the cases where the optimality of D-TDMA can
be analytically proved, it is verified by simulation that there are
also many circumstances where D-TDMA with the optimal user
scheduling and power control performs very closely to the optimal
non-TDMA-based schemes in the fading C-MAC. Furthermore, an
interesting new form of multiuser diversity is observed for the
fading C-MAC by exploiting the additional CSI of channels between
secondary transmitters and primary receivers, which differs from
that in the conventional fading MAC by exploiting only the CSI of
channels between secondary users and BS.

Finally, it is worth pointing out that with the techniques
introduced in this paper, it is possible to derive the optimal
resource allocation for the more general cases where all LT/ST TPC
and IPC are present, and/or secondary users have different
priorities for rate allocation (i.e., characterization of the
capacity region instead of the sum capacity). Moreover, the results
in this paper are also applicable to the general channel models
consisting of parallel Gaussian channels over which the average and
instantaneous (transmit or interference) power constraints can be
applied, e.g., the frequency-selective fading broadband channel
which is decomposable into parallel narrow-band channels at each
fading state via the well-known
orthogonal-frequency-division-multiplexing (OFDM)
modulation/demodulation.

\appendices

\section{Proof of Lemma \ref{lemma:TDMA opt LT LT}} \label{appendix:proof TDMA opt LT LT}

Suppose that there are two arbitrary users $i$ and $j$ with
$p_i^*>0$ and $p_j^*>0$. From (\ref{eq:KKT2 LT LT}), it follows that
$\delta_i^*=0$ and $\delta_j^*=0$. Applying this fact to
(\ref{eq:KKT1 LT LT}), the following equality must hold:
\begin{equation}
\frac{h_i}{\lambda_i+\sum_{m=1}^M\mu_mg_{im}}=\frac{h_j}{\lambda_j+\sum_{m=1}^M\mu_mg_{jm}}.
\end{equation}
Since $h_i$ and $g_{im}$'s are independent of $h_j$ and $g_{jm}$'s,
and furthermore $\lambda_i$, $\lambda_j$, and $\mu_m$'s are all
constants in Problem \ref{prob:LT LT MAC per state}, it can be
inferred that the above equality is satisfied with a zero
probability. Thus, it is concluded that there is at most one user
with a strictly positive power value.

\section{Proof of Lemma \ref{lemma:opt power user LT LT}}
\label{appendix:proof opt user power LT LT}

Let user $i$ be the user that can transmit, i.e, $p_i^*> 0$, while
for the other users $j\neq i$, $p_j^*=0$. Problem \ref{prob:LT LT
MAC per state} then becomes the maximization of $\log(1+h_ip_i)-
\lambda_ip_i -\sum_{m=1}^M\mu_mg_{im}p_i$ subject to $p_i\geq 0$,
for which $p_i^*$ given in (\ref{eq:opt power LT LT}) can be easily
shown to be the optimal solution. Next, we need to show that for the
selected user $i$ for transmission, if $p_i^*>0$, it must satisfy
(\ref{eq:user ordering LT LT}). Since $p_i^*>0$, from (\ref{eq:KKT2
LT LT}) it follows that $\delta_i^*=0$. Since $\delta_j^*\geq 0,
\forall j\neq i$, from (\ref{eq:KKT1 LT LT}), it follows that
\begin{eqnarray}
\frac{h_i}{1+h_ip_i^*}-\lambda_i-\sum_{m=1}^M\mu_mg_{im}&=&0
\\ \frac{h_j}{1+h_ip_i^*}-\lambda_j-\sum_{m=1}^M\mu_mg_{jm}&\leq&0,
\ \forall j\neq i
\end{eqnarray}
from which (\ref{eq:user ordering LT LT}) can be obtained.

\section{Proof of Lemma \ref{lemma:TDMA opt LT ST}} \label{appendix:proof TDMA opt LT ST}

Suppose that there are $|\mathcal{J}|$ users with $p_j^*>0$, where
$j\in\mathcal{J}$ and $\mathcal{J}\subseteq\{1,2,\ldots,K\}$. Then
from (\ref{eq:KKT3 LT ST}), it follows that $\delta_j^*=0$, if
$j\in\mathcal{J}$. Let $c^*=1+\sum_{l=1}^Kh_lp_l^*$. From
(\ref{eq:KKT1 LT ST}), the following equalities must hold:
\begin{equation}
\frac{h_j}{c^*}-\lambda_j-\sum_{m=1}^M\mu_m^*g_{jm}=0, \forall
j\in\mathcal{J}.
\end{equation}
Removing $c^*$ in the above equations yields
\begin{equation}\label{eq:LT ST equalities}
\frac{\lambda_i+\sum_{m=1}^M\mu_m^*g_{im}}{h_i}=\frac{\lambda_j+\sum_{m=1}^M\mu_m^*g_{jm}}{h_j}, \forall j\in\mathcal{J}, j\neq i
\end{equation}
where $i$ is an arbitrary user index in $\mathcal{J}$. Notice that
in (\ref{eq:LT ST equalities}) there are $M$ variables
$\mu_1^*$,\ldots, $\mu_M^*$, but $|\mathcal{J}|-1$ independent
equations (with probability one). Therefore, $M\geq|\mathcal{J}|-1$
must hold in order for the above equations to have at least one set
of solutions. It then concludes that $|\mathcal{J}|$ must be no
greater than $M+1$.

\section{Proof of Theorem \ref{theorem:TDMA opt conditions LT ST}} \label{appendix:proof TDMA opt conditions LT ST}

Suppose that user $i$ transmits with $p_i^*>0$, while for the other
users $j\in\{1,\ldots,K\}, j\neq i$, $p_j^*=0$. We will consider the
following two cases: i) All $\mu_m^*$'s are equal to zero; ii) There
is one and only one $\mu_{m}^*$, denoted as $\mu_{m'}^*$, which is
strictly positive. Notice that it is impossible for more than one
$\mu_m^*$'s to be strictly positive at the same time, which can be
shown as follows. For user $i$, from (\ref{eq:KKT2 LT ST}),
$\mu_{m'}^*>0$ suggests that $g_{im'}p_i^*=\Gamma_{m'}^{\rm ST}$.
Supposing that there is $\tilde{m}\neq m'$ such that
$\mu_{\tilde{m}}^*>0$ and thus
$g_{i\tilde{m}}p_i^*=\Gamma_{\tilde{m}}^{\rm ST}$, a contradiction
then occurs as $\frac{g_{im'}}{\Gamma_{m'}^{\rm
ST}}=\frac{g_{i\tilde{m}}}{\Gamma_{\tilde{m}}^{\rm ST}}$ holds with
a zero probability.

First, we will prove the ``only if'' part of Theorem
\ref{theorem:TDMA opt conditions LT ST}. Consider initially the case
where all $\mu_m^*$'s are equal to zero. Suppose that $p_i^*>0$,
from (\ref{eq:KKT3 LT ST}) it follows that $\delta_i^*=0$. Since
$\delta_j^*\geq 0, \forall j\neq i$, from (\ref{eq:KKT1 LT ST}) the
followings must be true:
\begin{eqnarray}
\frac{h_i}{1+h_ip_i^*}-\lambda_i&=&0 \label{eq:equality LT ST} \\
\frac{h_j}{1+h_ip_i^*}-\lambda_j&\leq& 0, \ \forall j\neq i.
\end{eqnarray}
Thus, user $i$ must satisfy
$\frac{h_i}{\lambda_i}\geq\frac{h_j}{\lambda_j}, \forall j\neq i$.
From (\ref{eq:equality LT ST}), it follows that
$p_i^*=\left(\frac{1}{\lambda_i}-\frac{1}{h_i}\right)^+$ in this
case. Also notice that from (\ref{eq:KKT4 LT ST})
$g_{im}p_i^*\leq\Gamma_m^{\rm ST}$ must hold for $\forall
m=1,\ldots,M$. Therefore, we conclude that
$p_i^*\leq\frac{\Gamma_{m'}^{\rm ST}}{g_{im'}}$, where
$m'=\arg\min_{m\in\{1,\ldots,M\}}\frac{\Gamma_m^{\rm ST}}{g_{im}}$,
and thus $\left(\frac{1}{\lambda_i}-\frac{1}{h_i}\right)^+\leq
\frac{\Gamma_{m'}^{\rm ST}}{g_{im'}}$. Therefore, the first set of
conditions in Theorem \ref{theorem:TDMA opt conditions LT ST} is
obtained.

In the second case where there is one and only one $\mu_{m'}^*>0$,
it follows from (\ref{eq:KKT2 LT ST}) that
$g_{im'}p_i^*=\Gamma_{m'}^{\rm ST}$. Since from (\ref{eq:KKT4 LT
ST}) we have $g_{im}p_i^*\leq\Gamma_m^{\rm ST}, \forall m\neq m'$,
it follows that $\frac{\Gamma_{m'}^{\rm
ST}}{g_{im'}}\leq\frac{\Gamma_m^{\rm ST}}{g_{im}}$, and thus, again,
$m'=\arg\min_{m\in\{1,\ldots,M\}}\frac{\Gamma_m^{\rm ST}}{g_{im}}$,
and $p_i^*=\frac{\Gamma_{m'}^{\rm ST}}{g_{im'}}$ in this case. From
(\ref{eq:KKT1 LT ST}), we have
\begin{equation}\label{eq:opt mu LT ST}
\mu_{m'}^*=\left(\frac{h_i}{1+h_ip_i^*}-\lambda_i\right)\frac{1}{g_{im'}}.
\end{equation}
Since $\mu_{m'}^*>0$, from (\ref{eq:opt mu LT ST}) it follows that
$\frac{1}{\lambda_i}-\frac{1}{h_i}>p_i^*=\frac{\Gamma_{m'}^{\rm
ST}}{g_{im'}}$. Furthermore, from (\ref{eq:KKT1 LT ST}), the
followings must be true:
\begin{eqnarray}
\frac{h_i}{1+h_ip_i^*}-\lambda_i-\mu_{m'}^*g_{im'}&=&0 \\
\frac{h_j}{1+h_ip_i^*}-\lambda_j- \mu_{m'}^*g_{jm'} &\leq& 0, \
\forall j\neq i.
\end{eqnarray}
Thus, we have
\begin{equation}
\frac{h_i}{\lambda_i+\mu_{m'}^*
g_{im'}}\geq\frac{h_j}{\lambda_j+\mu_{m'}^*g_{jm'}}, \ \forall j\neq
i.
\end{equation}
Substituting $\mu_{m'}^*$ in (\ref{eq:opt mu LT ST}) into the above
inequalities yields
\begin{equation}
\left(h_jg_{im'}-h_ig_{jm'}\right)\frac{g_{im'}}{g_{im'}+h_i\Gamma_{m'}^{\rm ST}}\leq\left(\lambda_jg_{im'}-\lambda_ig_{jm'}\right),
\end{equation}
$\forall j\neq i$. The second set of conditions in Theorem \ref{theorem:TDMA opt conditions LT ST} is thus obtained.

Next, the ``if'' part of Theorem \ref{theorem:TDMA opt conditions LT
ST} can be shown easily by the fact that for a strictly-convex
optimization problem, the KKT conditions are not only necessary but
also sufficient to be satisfied by the unique set of primal and dual
optimal solutions \cite{Boydbook}.

\section{Proof of Lemma \ref{lemma:opt 1 ST LT}} \label{appendix:proof lemma 1 ST LT}

Since $p_j^*=0$, $p_i^*>0$, from (\ref{eq:KKT2 ST LT}) and
(\ref{eq:KKT3 ST LT}) it follows that $\lambda_j^*=0$ and
$\delta_i^*=0$, respectively. Then, from (\ref{eq:KKT1 ST LT}) it
follows that
\begin{eqnarray}
\frac{h_i}{1+\sum_{l=1}^Kh_lp_l^*}-\sum_{m=1}^M\mu_m g_{im}&\geq&0
\\ \frac{h_j}{1+\sum_{l=1}^Kh_lp_l^*}-\sum_{m=1}^M\mu_m
g_{jm}&\leq&0.
\end{eqnarray}
From the above two inequalities, Lemma \ref{lemma:opt 1 ST LT} can
be easily shown.

\section{Proof of Lemma \ref{lemma:opt 2 ST LT}} \label{appendix:proof lemma 2 ST LT}

The following lemma is required for the proof of Lemma
\ref{lemma:opt 2 ST LT}:
\begin{lemma} \label{lemma:opt 3 ST LT}
The optimal solution of Problem \ref{prob:ST LT MAC per state} has
at most one user, indexed by $i$, which satisfies $0<p_i^*<P_i^{\rm
ST}$, where $i=\pi(|\mathcal{I}|)$; and the optimal sum-power of
transmitting users must satisfy
$\sum_{a=1}^{|\mathcal{I}|}h_{\pi(a)}p_{\pi(a)}^*=\frac{h_{\pi(|\mathcal{I}|)}}{\sum_{m=1}^M\mu_mg_{\pi(|\mathcal{I}|)m}}-1$.
\end{lemma}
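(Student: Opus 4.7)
The plan is to extract the structure of the optimal power profile directly from the KKT conditions (\ref{eq:KKT1 ST LT})--(\ref{eq:KKT4 ST LT}), then combine this with the ratio-ordering result in Lemma~\ref{lemma:opt 1 ST LT}. For brevity, let $r_k := h_k / \sum_{m=1}^{M} \mu_m g_{km}$ and $S^{*} := \sum_{l=1}^{K} h_{l} p_{l}^{*}$. I would first partition the users into three groups according to their optimal primal values: (i) non-transmitters with $p_k^{*}=0$, for which (\ref{eq:KKT3 ST LT}) is vacuous and (\ref{eq:KKT2 ST LT}) forces $\lambda_k^{*}=0$, so that (\ref{eq:KKT1 ST LT}) gives $\delta_k^{*} = \sum_m \mu_m g_{km} - h_k/(1+S^{*}) \geq 0$, i.e., $r_k \leq 1+S^{*}$; (ii) max-transmitters with $p_k^{*} = P_k^{\rm ST}$, for which (\ref{eq:KKT3 ST LT}) forces $\delta_k^{*}=0$ and (\ref{eq:KKT1 ST LT}) gives $\lambda_k^{*} = h_k/(1+S^{*}) - \sum_m \mu_m g_{km} \geq 0$, i.e., $r_k \geq 1+S^{*}$; and (iii) partial transmitters with $0 < p_k^{*} < P_k^{\rm ST}$, for which (\ref{eq:KKT2 ST LT}) and (\ref{eq:KKT3 ST LT}) jointly force $\lambda_k^{*}=\delta_k^{*}=0$, so that (\ref{eq:KKT1 ST LT}) gives the exact equality $r_k = 1 + S^{*}$.

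Given this trichotomy, the first claim of the lemma (at most one partial transmitter) follows from a probability-one argument analogous to the one used in Appendix~\ref{appendix:proof TDMA opt LT LT}: two distinct partial transmitters $i,j$ would be forced to satisfy $r_i = r_j$, which is an equation on the channel gains alone and therefore fails almost surely under the continuous joint cdf of $\mv{\alpha}$. To pin down which user is the (unique) partial transmitter when one exists, I would invoke Lemma~\ref{lemma:opt 1 ST LT}: the set $\mathcal{I}$ of transmitting users coincides with the top $|\mathcal{I}|$ indices $\pi(1),\ldots,\pi(|\mathcal{I}|)$ in the $r$-ordering. Within $\mathcal{I}$, every max-transmitter satisfies $r_k \geq 1+S^{*}$ while the partial transmitter achieves $r_k = 1 + S^{*}$; hence the partial transmitter has the smallest ratio in $\mathcal{I}$ and must therefore be $\pi(|\mathcal{I}|)$.

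The sum-power identity is then immediate: when a partial transmitter exists, the equality $r_{\pi(|\mathcal{I}|)} = 1 + S^{*}$ rearranges to $\sum_{a=1}^{|\mathcal{I}|} h_{\pi(a)} p_{\pi(a)}^{*} = h_{\pi(|\mathcal{I}|)}/\sum_{m=1}^{M}\mu_m g_{\pi(|\mathcal{I}|) m} - 1$, which is the claimed identity; the complementary ``all-at-max'' case is where the $\min$ in Lemma~\ref{lemma:opt 2 ST LT} selects the $P^{\rm ST}$ branch, consistent with the weak inequality $r_{\pi(|\mathcal{I}|)} \geq 1+S^{*}$ derived above.

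The main obstacle I anticipate is the probability-one step. Although $S^{*}$ itself depends on the realization of $\mv{\alpha}$, the pairwise equality $r_i = r_j$ between two candidate partial transmitters is a condition only on $h_i, h_j, g_{i\cdot}, g_{j\cdot}$ (the $\mu_m$'s are dual variables of the outer LT problem and are treated as deterministic constants at the per-state stage), and therefore carves out a set of Lebesgue measure zero under the joint cdf $F(\mv{\alpha})$. Writing this carefully, and explicitly invoking the independence of the channel gains together with the continuity and differentiability of $F$, is the step that requires the most attention; the remaining arithmetic is a direct rearrangement.
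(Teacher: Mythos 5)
Your proposal is correct and follows essentially the same route as the paper's own proof: both extract $\lambda_k^*=\delta_k^*=0$ for a partial transmitter from the complementary-slackness conditions, note that two partial transmitters would force the ratio equality $h_i/\sum_m\mu_m g_{im}=h_j/\sum_m\mu_m g_{jm}$ (a zero-probability event under the continuous cdf, since the common factor $1+S^*$ cancels), identify the partial transmitter as $\pi(|\mathcal{I}|)$ by comparing its stationarity equality against the inequality satisfied by the max-power users, and rearrange that equality to obtain the sum-power identity. Your explicit three-way classification of users and your remark on the all-at-max boundary case are a slightly more systematic packaging of the same argument, not a different one.
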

\begin{proof}
Suppose that there are two users $i$ and $j$ with $0<p_i^*<P_i^{\rm
ST}$ and $0<p_j^*<P_i^{\rm ST}$. From (\ref{eq:KKT2 ST LT}) and
(\ref{eq:KKT3 ST LT}), it follows that $\lambda_i^*=\lambda_j^*=0$
and $\delta_i^*=\delta_j^*=0$, respectively. Using these facts, from
(\ref{eq:KKT1 ST LT}), it follows that the following two equalities
must hold at the same time:
\begin{eqnarray}
\frac{h_i}{1+\sum_{l=1}^Kh_lp_l^*}-\sum_{m=1}^M\mu_m g_{im}&=&0
\label{eq:sum power ST LT} \\
\frac{h_j}{1+\sum_{l=1}^Kh_lp_l^*}-\sum_{m=1}^M\mu_m g_{jm}&=&0.
\end{eqnarray}
Thus, we have
\begin{equation}
\frac{h_i}{\sum_{m=1}^M\mu_mg_{im}}=\frac{h_j}{\sum_{m=1}^M\mu_mg_{jm}}.
\end{equation}
Since $h_i$ and $g_{im}$'s are independent of $h_j$ and $g_{jm}$'s,
and $\mu_m$'s are constants, it is inferred that the above equality
is satisfied with a zero probability. Thus, we conclude that there
is at most one user $i$ with $0<p_i^*<P_i^{\rm ST}$. From
(\ref{eq:sum power ST LT}), we have
\begin{equation}\label{eq:equality ST LT}
\sum_{l=1}^Kh_lp_l^*=\sum_{a=1}^{|\mathcal{I}|}h_{\pi(a)}p_{\pi(a)}^*=\frac{h_i}{\sum_{m=1}^M\mu_mg_{im}}-1.
\end{equation}
Using (\ref{eq:KKT1 ST LT}) and (\ref{eq:equality ST LT}), it is
easy to see that for any user $k\in\mathcal{I}, k\neq i$ with
$p_k^*>0$, it must satisfy
\begin{equation}
\frac{h_k}{\sum_{m=1}^M\mu_m g_{km}}\geq
\frac{h_i}{\sum_{m=1}^M\mu_m g_{im}}.
\end{equation}
Thus, we conclude that $i=\pi(|\mathcal{I}|)$.
\end{proof}

Lemma \ref{lemma:opt 3 ST LT} suggests that only one of the
following two sets of solutions for $p_k^*, k\in\mathcal{I},$ can be
true, which are
\begin{itemize}
\item Case I: $p_{\pi(a)}^*=P_{\pi(a)}^{\rm ST},
a=1,\ldots,|\mathcal{I}|$;

\item Case II: $p_{\pi(a)}^*=P_{\pi(a)}^{\rm ST},
a=1,\ldots,|\mathcal{I}|-1$, and
$p_{\pi(|\mathcal{I}|)}^*=\left(\frac{h_{\pi(|\mathcal{I})|}}{\sum_{m=1}^M\mu_mg_{\pi(|\mathcal{I}|)m}}-1
-\sum_{b=1}^{|\mathcal{I}|-1}h_{\pi(b)}P_{\pi(b)}^{\rm
ST}\right)\frac{1}{h_{\pi(|\mathcal{I}|)}}$.
\end{itemize}
Since $p_{\pi(|\mathcal{I}|)}^*\leq P_{\pi(|\mathcal{I}|)}^{\rm
ST}$, it then follows that
\begin{align}
p_{\pi(|\mathcal{I}|)}^* =&~ \min\bigg(P_{\pi(|\mathcal{I}|)}^{\rm ST},
\bigg(\frac{h_{\pi(|\mathcal{I})|}}{\sum_{m=1}^M\mu_mg_{\pi(|\mathcal{I}|)m}}-1 \nonumber \\ &~~~~~~~~~~~~~~~~~~~~
-\sum_{b=1}^{|\mathcal{I}|-1}h_{\pi(b)}P_{\pi(b)}^{\rm ST}\bigg)\frac{1}{h_{\pi(|\mathcal{I}|)}}\bigg).
\end{align}

The remaining part to be shown for Lemma \ref{lemma:opt 2 ST LT} is
that the optimal number of active users $|\mathcal{I}|$ is the
largest value of $x$ such that
\begin{equation}\label{eq:inequality ST LT}
\frac{h_{\pi(x)}}{\sum_{m=1}^M\mu_mg_{\pi(x)m}}> 1
+\sum_{b=1}^{x-1}h_{\pi(b)}P_{\pi(b)}^{\rm ST}.
\end{equation}
First, we show that in both Case I and Case II, for any user
$\pi(a)\in\mathcal{I}, a=1,\ldots,|\mathcal{I}|$, the above
inequality holds. Since for (\ref{eq:inequality ST LT}), from Lemma
\ref{lemma:opt 1 ST LT} it follows that its left-hand side decreases
as $x$ increases, while its right-hand side increases with $x$, it
is sufficient to show that (\ref{eq:inequality ST LT}) holds for
$a=|\mathcal{I}|$. This is the case since from (\ref{eq:KKT1 ST LT})
with $\delta_{\pi(|\mathcal{I}|)}^*=0$ and
$\lambda_{\pi(|\mathcal{I}|)}^*\geq 0$, we have
\begin{eqnarray}
\frac{h_{\pi(|\mathcal{I}|)}}{\sum_{m=1}^M\mu_mg_{\pi(|\mathcal{I}|)m}}
&\geq& 1 +\sum_{b=1}^{|\mathcal{I}|}h_{\pi(b)}p_{\pi(b)}^* \\ &>&
1+\sum_{b=1}^{|\mathcal{I}|-1}h_{\pi(b)}P_{\pi(b)}^{\rm ST}.
\end{eqnarray}
Next, we show that for any user $\pi(j), j\in
\{|\mathcal{I}|+1,\ldots,K\}$, (\ref{eq:inequality ST LT}) does not
hold. Again, it is sufficient to consider user
$\pi(|\mathcal{I}|+1)$ since if it does not satisfy
(\ref{eq:inequality ST LT}), neither does any of the other users
$\pi(|\mathcal{I}|+2),\ldots,\pi(K)$. For user
$\pi(|\mathcal{I}|+1)$, from (\ref{eq:KKT1 ST LT}) with
$\delta_{\pi(|\mathcal{I}|+1)}^*\geq 0$ and
$\lambda_{\pi(|\mathcal{I}|+1)}^*= 0$, it follows that
\begin{eqnarray}
\frac{h_{\pi(|\mathcal{I}|+1)}}{\sum_{m=1}^M\mu_mg_{\pi(|\mathcal{I}|+1)m}}
&\leq& 1 +\sum_{b=1}^{|\mathcal{I}|}h_{\pi(b)}p_{\pi(b)}^* \\ &\leq&
1+\sum_{b=1}^{|\mathcal{I}|}h_{\pi(b)}P_{\pi(b)}^{\rm ST}.
\end{eqnarray}
Therefore, it is concluded that (\ref{eq:inequality ST LT}) can be
used to determine $|\mathcal{I}|$.

\section{Proof of Theorem \ref{theorem:TDMA opt conditions ST ST}} \label{appendix:proof TDMA opt conditions ST ST}
The proof of Theorem \ref{theorem:TDMA opt conditions ST ST} is also
based on the KKT optimality conditions for Problem \ref{prob:ST ST
MAC per state}. Let $\lambda_k^*$, $\mu_m^*$, and $\delta_k^*$,
$k=1,\ldots,K, m=1,\ldots,M$ be the optimal dual variables
associated with the constraints in (\ref{eq:constraint 1 ST ST}),
(\ref{eq:constraint 2 ST ST}), and (\ref{eq:constraint 3 ST ST}),
respectively. The KKT conditions can then be expressed as
\begin{align}
\frac{h_k}{1+\sum_{l=1}^Kh_lp_l^*}-\lambda_k^*-\sum_{m=1}^M\mu_m^*g_{km} +\delta_k^*&=0,
\forall k \label{eq:KKT1 ST ST} \\
\lambda_k^*\left(p_k^* -P_k^{\rm ST}\right)&=0,
\forall k \label{eq:KKT2 ST ST} \\
\mu_m^*\left(\sum_{k=1}^K g_{km}p_{km}^*-\Gamma_m^{\rm ST}\right)&=0,
\forall m \label{eq:KKT3 ST ST} \\
\delta_k^*p_k^*&=0,  \forall k \label{eq:KKT4 ST ST}
\\ p_k^* &\leq P_k^{\rm ST}, \forall k \label{eq:KKT5 ST ST}
\\ \sum_{k=1}^K g_{km}p_{km}^*&\leq\Gamma_m^{\rm
ST}, \forall m \label{eq:KKT6 ST ST}
\end{align}
with $p_k^*\geq 0, \lambda_k^*\geq 0, \mu_m^*\geq 0$, and
$\delta_k^*\geq 0, \forall k, m$. First, we will prove the ``only
if'' part of Theorem \ref{theorem:TDMA opt conditions ST ST}.
Suppose that user $i$ should transmit with $p_i^*>0$, while for the
other users $ j\in\{1,\ldots,K\}, j\neq i$, $p_j^*=0$. From
(\ref{eq:KKT2 ST ST}) and (\ref{eq:KKT4 ST ST}), it follows that
$\lambda_j^*=0, \forall j\neq i$ and $\delta_i^*=0$, respectively.

We will show that there is one and only one $\mu_{m}^*$, denoted as
$\mu_{m'}^*$, which is strictly positive. Notice that it is
impossible for more than one $\mu_m^*$'s to be strictly positive at
the same time. For user $i$, from (\ref{eq:KKT3 ST ST}),
$\mu_{m'}^*>0$ suggests that $g_{im'}p_i^*=\Gamma_{m'}^{\rm ST}$.
Supposing that there is $\tilde{m}\neq m'$ such that
$\mu_{\tilde{m}}^*>0$ and thus
$g_{i\tilde{m}}p_i^*=\Gamma_{\tilde{m}}^{\rm ST}$, a contradiction
then occurs as $\frac{g_{im'}}{\Gamma_{m'}^{\rm
ST}}=\frac{g_{i\tilde{m}}}{\Gamma_{\tilde{m}}^{\rm ST}}$ holds with
a zero probability. Second, we will show that it is also impossible
for all $\mu_m^*$'s to be zero. If this is the case, (\ref{eq:KKT1
ST ST}) for any user $j\neq i$, becomes
$\frac{h_j}{1+h_ip_i^*}+\delta_j^*=0$. This can be true only when
$h_j=0$, which occurs with a zero probability. Therefore, we
conclude that there is one and only one $\mu_{m'}^*>0$.

Since $g_{im'}p_i^*=\Gamma_{m'}^{\rm ST}$ and from (\ref{eq:KKT6 ST
ST}) we have $g_{im}p_i^*\leq\Gamma_m^{\rm ST}, \forall m\neq m'$,
it follows that $\frac{\Gamma_{m'}^{\rm
ST}}{g_{im'}}\leq\frac{\Gamma_m^{\rm ST}}{g_{im}}$, and thus
$m'=\arg\min_{m\in\{1,\ldots,M\}}\frac{\Gamma_m^{\rm ST}}{g_{im}}$
and $p_i^*=\frac{\Gamma_{m'}^{\rm ST}}{g_{im'}}$. Also notice from
(\ref{eq:KKT5 ST ST}) that in this case $\frac{\Gamma_{m'}^{\rm
ST}}{g_{im'}}\leq P_i^{\rm ST}$ must hold. At last, considering
(\ref{eq:KKT1 ST ST}) for user $i$ and any other user $j$, we have
\begin{eqnarray}
\frac{h_i}{1+h_ip_i^*}-\mu_{m'}g_{im'} &=& 0 \\
\frac{h_j}{1+h_ip_i^*}-\mu_{m'}g_{jm'} &\leq& 0.
\end{eqnarray}
Thus, we conclude that $\frac{h_i}{g_{im'}}\geq \frac{h_j}{g_{jm'}},
\forall j\neq i$, must hold.

Next, the ``if'' part of Theorem \ref{theorem:TDMA opt conditions ST
ST} follows due to the fact that for a strictly-convex optimization
problem, the KKT conditions are both necessary and sufficient for
the unique set of primal and dual optimal solutions \cite{Boydbook}.

\section{Ergodic Sum Capacity for Fading C-MAC under TDMA Constraint} \label{appendix:TDMA}

In this appendix, we formally derive the optimal rule of user
selection and power control to achieve the ergodic sum capacity for
the SISO fading C-MAC {\it under an explicit TDMA constraint}, in
addition to any combination of transmit-power and interference-power
constraints. The TDMA constraint implies that at each fading state
there is only one SU that can transmit. Let $\Pi(\mv{\alpha})$ be a
mapping function that gives the index of the SU selected for
transmission at a fading state with channel realization
$\mv{\alpha}$. Note that for this particular fading state,
$p_{\Pi(\mv{\alpha})}\geq 0$, while for the other SUs
$k\in\{1,\ldots,K\}$, $k\neq \Pi(\mv{\alpha})$, $p_k=0$. The ergodic
sum capacity of the fading C-MAC under TDMA constraint can be
obtained as
\begin{equation}\label{eq:sum capacity MAC TDMA}
C_{\rm MAC}^{\rm TDMA}=\max_{\Pi(\mv{\alpha})}\max_{\{p_k(\mv{\alpha})\}\in\mathcal{F}}\mathbb{E}\left[
\log\left(1+h_{\Pi(\mv{\alpha})}p_{\Pi(\mv{\alpha})}(\mv{\alpha})\right)\right]
\end{equation}
where $\mathcal{F}$ is specified by a particular combination of
power constraints described in (\ref{eq:LT TPC MAC})-(\ref{eq:ST IPC
MAC}). Clearly, for any given function $\Pi(\mv{\alpha})$, the
capacity maximization in (\ref{eq:sum capacity MAC TDMA}) over
$\mathcal{F}$ is a convex optimization problem. However, the
maximization over the function $\Pi(\mv{\alpha})$ may not be
necessarily convex, and thus standard convex optimization techniques
may not apply directly. Fortunately, it will be shown next that the
optimization problem in (\ref{eq:sum capacity MAC TDMA}) can be
efficiently solved for all considered cases of mixed LT-/ST-TPC and
LT-/ST-IPC.

\subsection{Long-Term Transmit-Power and Interference-Power Constraints}

From (\ref{eq:sum capacity MAC TDMA}), the ergodic sum capacity
under the TDMA constraint, as well as the LT-TPC in (\ref{eq:LT TPC
MAC}) and the LT-IPC in (\ref{eq:LT IPC MAC}) can be obtained by
solving the following optimization problem:
\begin{problem}\label{prob:LT LT MAC TDMA}
\begin{align}
\mathop{\mathtt{Max.}}_{\Pi(\mv{\alpha}),\{p_k(\mv{\alpha})\}}&~~ \mathbb{E}\left[
\log\left(1+h_{\Pi(\mv{\alpha})}p_{\Pi(\mv{\alpha})}(\mv{\alpha})\right)\right] \nonumber
\\ \mathtt {s.t.} &~~
\mathbb{E}\left[p_k(\mv{\alpha})\cdot\mv{1}(\Pi(\mv{\alpha})=k)\right]\leq P^{\rm LT}_k, \ \forall k \label{eq:LT TPC MAC TDMA} \\ &~~
\mathbb{E}\left[g_{\Pi(\mv{\alpha})m}p_{\Pi(\mv{\alpha})}(\mv{\alpha})\right]\leq \Gamma^{\rm LT}_m, \ \forall m \label{eq:LT IPC MAC TDMA}
\end{align}
where $\mv{1}(A)$ is the indicator function taking the values of 1
or 0 depending on the trueness or falseness of event $A$,
respectively.
\end{problem}

\begin{figure*}
\begin{align}\label{eq:Lagrangian LT LT TDMA}
\mathbb{E}\left[ \log\left(1+h_{\Pi(\mv{\alpha})}p_{\Pi(\mv{\alpha})}(\mv{\alpha})\right)\right] -\sum_{k=1}^K\lambda_k\left\{
\mathbb{E}\left[p_k(\mv{\alpha})\cdot\mv{1}(\Pi(\mv{\alpha})=k)\right]- P^{\rm LT}_k\right\}
-\sum_{m=1}^M\mu_m\left\{\mathbb{E}\left[g_{\Pi(\mv{\alpha})m}p_{\Pi(\mv{\alpha})}(\mv{\alpha})\right]- \Gamma^{\rm LT}_m\right\}
\end{align}
\end{figure*}

First, we write the Lagrangian of this problem,
$\mathcal{L}(\Pi(\mv{\alpha}),\{p_k(\mv{\alpha})\},\{\lambda_k\},\{\mu_m\})$,
as in (\ref{eq:Lagrangian LT LT TDMA}) (shown on the next page),
where $\lambda_k$ and $\mu_m$ are the nonnegative dual variables
associated with the corresponding constraints in (\ref{eq:LT TPC MAC
TDMA}) and (\ref{eq:LT IPC MAC TDMA}), respectively, $k=1,\ldots,K$,
$m=1,\ldots,M$. Then, the Lagrange dual function,
$g(\{\lambda_k\},\{\mu_m\})$,  is defined as
\begin{eqnarray}\label{eq:Lagrange dual LT LT TDMA}
\max_{\Pi(\mv{\alpha}),\{p_k(\mv{\alpha})\}}\mathcal{L}(\Pi(\mv{\alpha}),\{p_k(\mv{\alpha})\},\{\lambda_k\},\{\mu_m\}).
\end{eqnarray}
The dual problem is accordingly defined as
$\min_{\{\lambda_k\},\{\mu_m\}} g(\{\lambda_k\},\{\mu_m\})$. Since
the problem at hand may not be convex, the duality gap between the
optimal values of the original and the dual problems may not be
zero. However, it will be shown in the later part of this subsection
that the duality gap for Problem \ref{prob:LT LT MAC TDMA} is indeed
zero.

We consider only the maximization problem in (\ref{eq:Lagrange dual
LT LT TDMA}) for obtaining $g(\{\lambda_k\},\{\mu_m\})$ with some
given $\lambda_k$'s and $\mu_m$'s, while the minimization of
$g(\{\lambda_k\},\{\mu_m\})$ over $\lambda_k$'s and $\mu_m$'s can be
obtained by the ellipsoid method, since it is always a convex
optimization problem. For each fading state, the maximization
problem in (\ref{eq:Lagrange dual LT LT TDMA}) can be expressed as
(with $\mv{\alpha}$ dropped for brevity)
\begin{problem}\label{prob:LT LT MAC per state TDMA}
\begin{eqnarray}
\mathop{\mathtt{Max.}}_{\Pi, p_{\Pi}} &&
\log\left(1+h_{\Pi}p_{\Pi}\right)- \lambda_{\Pi}p_{\Pi}
-\sum_{m=1}^M\mu_mg_{\Pi m}p_{\Pi}
\\ \mathtt {s.t.} && p_{\Pi}\geq 0.
\end{eqnarray}
\end{problem}
For any given user $\Pi$, the optimal power solution for the above
problem can be obtained as
\begin{equation}\label{eq:opt power LT LT TDMA}
p_{\Pi}^*=\left( \frac{1}{\lambda_{\Pi}+\sum_{m=1}^M\mu_m g_{\Pi
m}}-\frac{1}{h_{\Pi}}\right)^+.
\end{equation}
Substituting this solution into the objective function of Problem
\ref{prob:LT LT MAC per state TDMA} yields
\begin{equation}
(\log(\frac{h_{\Pi}}{\lambda_{\Pi}+\sum_{m=1}^M\mu_m g_{\Pi m}}))^+ -(1-\frac{\lambda_{\Pi}+\sum_{m=1}^M\mu_m g_{\Pi m}}{h_{\Pi}})^+.
\end{equation}
It is easy to verify that the maximization of the above function
over $\Pi$ is attained with user $i$ that satisfies
\begin{equation}\label{eq:user ordering LT LT TDMA}
\frac{h_i}{\lambda_i+\sum_{m=1}^M\mu_m g_{im}}\geq\frac{h_j}{\lambda_j+\sum_{m=1}^M\mu_m g_{jm}}, \forall j\neq i.
\end{equation}
From (\ref{eq:opt power LT LT TDMA}) and (\ref{eq:user ordering LT
LT TDMA}), it follows that the same set of solutions for Problem
\ref{prob:LT LT MAC per state} without the TDMA constraint, which is
given in Lemma \ref{lemma:opt power user LT LT}, also holds for
Problem \ref{prob:LT LT MAC per state TDMA} with the TDMA
constraint. Note that the optimal solutions of Problem \ref{prob:LT
LT MAC} without the TDMA constraint are also TDMA-based, and thus
they are also feasible solutions to Problem \ref{prob:LT LT MAC
TDMA} with the TDMA constraint. Since these solutions have also been
shown in the above to be optimal for the dual problem of Problem
\ref{prob:LT LT MAC TDMA}, we conclude that the duality gap is zero
for Problem \ref{prob:LT LT MAC TDMA}; and both Problem \ref{prob:LT
LT MAC} and Problem \ref{prob:LT LT MAC TDMA} have the same set of
solutions.

\subsection{Long-Term Transmit-Power and Short-Term
Interference-Power Constraints}

The ergodic sum capacity under the TDMA constraint plus the LT-TPC
and the ST-IPC can be obtained as the optimal value of the following
problem:
\begin{problem}\label{prob:LT ST MAC TDMA}
\begin{eqnarray}
\mathop{\mathtt{Max.}}_{\Pi(\mv{\alpha}),\{p_k(\mv{\alpha})\}}&&
\mathbb{E}\left[
\log\left(1+h_{\Pi(\mv{\alpha})}p_{\Pi(\mv{\alpha})}(\mv{\alpha})\right)\right]
\nonumber
\\ \mathtt {s.t.} && (\ref{eq:LT TPC MAC TDMA}) \nonumber
\\ &&
g_{\Pi(\mv{\alpha})m}p_{\Pi(\mv{\alpha})}(\mv{\alpha})\leq
\Gamma^{\rm ST}_m, \ \forall \mv{\alpha}, m. \label{eq:ST IPC MAC
TDMA}
\end{eqnarray}
\end{problem}
Similarly as for Problem \ref{prob:LT LT MAC TDMA}, we apply the
Lagrange duality method for solving the above problem by introducing
the nonnegative dual variables $\lambda_k, k=1,\ldots,K$, associated
with the LT-TPC given in (\ref{eq:LT TPC MAC TDMA}). However, since
Problem \ref{prob:LT ST MAC TDMA} is not necessarily convex, the
duality gap for this problem may not be zero. Nevertheless, it can
be verified that Problem \ref{prob:LT ST MAC TDMA} satisfies the
so-called ``time-sharing'' conditions \cite{Yu06} and thus has a
zero duality gap. For brevity, we skip the details of derivations
here and present the optimal power-control policy in this case as
follows:
\begin{lemma}
In the optimal solution of Problem \ref{prob:LT ST MAC TDMA}, the
user $\Pi(\mv{\alpha})$ that transmits at a fading state with
channel realization $\mv{\alpha}$ maximizes the following expression
among all the users (with $\mv{\alpha}$ dropped for brevity):
\begin{equation}
\log\left(1+h_{\Pi}p^*_{\Pi}\right)-\lambda_{\Pi}p^*_{\Pi}
\end{equation}
where
\begin{equation}
p^*_{\Pi}=\min\left(\min_{m\in\{1,\ldots,M\}}\frac{\Gamma_m^{\rm ST}}{g_{\Pi m}},
\left(\frac{1}{\lambda_{\Pi}}-\frac{1}{h_{\Pi}}\right)^+\right)
\end{equation}
and $\lambda_k, k=1,\ldots,K$, are the optimal dual solutions
obtained by the ellipsoid method.
\end{lemma}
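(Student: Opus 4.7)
The plan is to mirror the Lagrange-duality derivation used for Problem~\ref{prob:LT LT MAC TDMA}, but now retaining the short-term interference constraints inside the per-state subproblem. I would introduce nonnegative dual variables $\lambda_k$, $k=1,\ldots,K$, only for the long-term transmit-power constraints in (\ref{eq:LT TPC MAC TDMA}), form the partial Lagrangian, and obtain the dual function by taking the supremum over all pairs $(\Pi(\mv{\alpha}),\{p_k(\mv{\alpha})\})$ subject to the remaining short-term constraints (\ref{eq:ST IPC MAC TDMA}) and $p_k(\mv{\alpha})\ge 0$. Because the dual function depends on $\{\lambda_k\}$ in a convex manner, I would then invoke the ellipsoid method to compute the optimal $\{\lambda_k^*\}$, exactly as done in earlier subsections.

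The crucial preliminary step is closing the duality gap. Although the primal problem is non-convex (the indicator selection $\Pi(\mv{\alpha})$ precludes convexity in the standard sense), it satisfies the time-sharing condition of \cite{Yu06}: for any two fading-state distributions and any scheduling/power policies feasible for each, the convex combination can be realized by time-sharing across independent realizations of the continuous-cdf channel process, and the resulting expected sum-rate equals the corresponding convex combination. This property guarantees that the duality gap vanishes, so that solving the dual problem yields the primal optimum and the optimal $\{\lambda_k^*\}$ recovers the optimal policy. I expect this time-sharing argument to be the main obstacle; the rest is direct calculus.

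Given fixed $\{\lambda_k\}$, the Lagrangian decouples across fading states, so at each $\mv{\alpha}$ I would solve
\begin{equation*}
\max_{\Pi\in\{1,\ldots,K\}}\ \max_{p_\Pi\ge 0,\ g_{\Pi m}p_\Pi\le \Gamma_m^{\rm ST}\ \forall m}\ \log(1+h_\Pi p_\Pi)-\lambda_\Pi p_\Pi.
\end{equation*}
For the inner maximization with $\Pi$ fixed, the objective is strictly concave in $p_\Pi$ with unconstrained maximizer $(1/\lambda_\Pi - 1/h_\Pi)^+$, while the active short-term interference constraints collapse to $p_\Pi \le \min_m \Gamma_m^{\rm ST}/g_{\Pi m}$. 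Since the objective is increasing on $[0,(1/\lambda_\Pi-1/h_\Pi)^+]$ and decreasing thereafter, the constrained optimum equals the minimum of the two thresholds, which is exactly the claimed $p_\Pi^*$. The outer maximization then picks the user index $\Pi$ giving the largest value of $\log(1+h_\Pi p_\Pi^*)-\lambda_\Pi p_\Pi^*$, which is the stated selection rule.

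Finally, plugging in the optimal dual solutions $\{\lambda_k^*\}$ obtained from the ellipsoid method converts the per-state rule into the globally optimal policy. Because the time-sharing argument has already closed the duality gap, this primal policy achieves $C_{\rm MAC}^{\rm TDMA}$ under LT-TPC and ST-IPC, completing the proof.
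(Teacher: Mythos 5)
Your proposal is correct and follows exactly the route the paper indicates for this lemma: dualize only the LT-TPC, keep the ST-IPC inside the per-state subproblem, solve the inner concave maximization to get the capped water-filling power, select the user maximizing the per-state Lagrangian value, and invoke the time-sharing condition of \cite{Yu06} to close the duality gap. The paper omits these details ``for brevity,'' and your write-up supplies them faithfully, including the one genuinely delicate point (zero duality gap despite the non-convex user-selection variable), which you justify at essentially the same level of rigor as the paper's own assertion.
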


\subsection{Short-Term Transmit-Power and Long-Term
Interference-Power Constraints}

The ergodic sum capacity under the TDMA constraint, the ST-TPC, and
the LT-IPC can be obtained as the optimal value of the following
problem:
\begin{problem}\label{prob:ST LT MAC TDMA}
\begin{eqnarray}
\mathop{\mathtt{Max.}}_{\Pi(\mv{\alpha}),\{p_k(\mv{\alpha})\}}&&
\mathbb{E}\left[
\log\left(1+h_{\Pi(\mv{\alpha})}p_{\Pi(\mv{\alpha})}(\mv{\alpha})\right)\right]
\nonumber
\\ \mathtt {s.t.} && p_{\Pi(\mv{\alpha})}(\mv{\alpha})\leq P^{\rm ST}_{\Pi(\mv{\alpha})}, \ \forall \mv{\alpha}
\label{eq:ST TPC MAC TDMA} \\
&& (\ref{eq:LT IPC MAC TDMA}). \nonumber
\end{eqnarray}
\end{problem}
By introducing the nonnegative dual variables $\mu_m, m=1,\ldots,M$,
associated with the LT-IPC given in (\ref{eq:LT IPC MAC TDMA}),
Problem \ref{prob:ST LT MAC TDMA} can be solved similarly as for
Problem \ref{prob:LT ST MAC TDMA} by the Lagrange duality method.
For brevity, we present the optimal power-control policy in this
case directly as follows:
\begin{lemma}
In the optimal solution of Problem \ref{prob:ST LT MAC TDMA}, the
user $\Pi(\mv{\alpha})$ that transmits at a fading state with
channel realization $\mv{\alpha}$ maximizes the following expression
among all the users (with $\mv{\alpha}$ dropped for brevity):
\begin{equation}
\log\left(1+h_{\Pi}p^*_{\Pi}\right)-\sum_{m=1}^M\mu_mg_{\Pi m}p^*_{\Pi}
\end{equation}
where
\begin{equation}
p^*_{\Pi}=\min\left(P_{\Pi}^{\rm ST}, \left(\frac{1}{\sum_{m=1}^M\mu_mg_{\Pi m}}-\frac{1}{h_{\Pi}}\right)^+\right)
\end{equation}
and $\mu_m, m=1,\ldots,M$, are the optimal dual solutions obtained
by the ellipsoid method.
\end{lemma}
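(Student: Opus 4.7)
The plan is to apply Lagrange duality to Problem \ref{prob:ST LT MAC TDMA}, dualizing only the $M$ long-term interference-power constraints in (\ref{eq:LT IPC MAC TDMA}) while keeping the per-state short-term transmit-power constraints (\ref{eq:ST TPC MAC TDMA}) inside. Introducing nonnegative multipliers $\mu_m$, $m=1,\ldots,M$, the Lagrangian can be written as
\begin{equation*}
\mathbb{E}\!\left[\log(1+h_{\Pi(\mv{\alpha})}p_{\Pi(\mv{\alpha})}(\mv{\alpha})) - \sum_{m=1}^M \mu_m g_{\Pi(\mv{\alpha})m} p_{\Pi(\mv{\alpha})}(\mv{\alpha})\right] + \sum_{m=1}^M \mu_m \Gamma_m^{\mathrm{LT}},
\end{equation*}
and because the remaining ST-TPC is separable across fading states, the dual function factorizes as $g(\{\mu_m\}) = \mathbb{E}[g'(\mv{\alpha})] + \sum_m \mu_m \Gamma_m^{\mathrm{LT}}$, where $g'(\mv{\alpha})$ is the maximum, over $\Pi\in\{1,\ldots,K\}$ and $p_\Pi\in[0,P_\Pi^{\mathrm{ST}}]$, of $\log(1+h_\Pi p_\Pi) - p_\Pi \sum_m \mu_m g_{\Pi m}$.

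Next I would solve the per-state inner problem in two nested stages. For a fixed candidate transmitting user $\Pi$, the objective is strictly concave in $p_\Pi$; differentiating and projecting the unconstrained stationary point $1/(\sum_m \mu_m g_{\Pi m}) - 1/h_\Pi$ onto the interval $[0,P_\Pi^{\mathrm{ST}}]$ yields
\begin{equation*}
p_\Pi^* = \min\!\left(P_\Pi^{\mathrm{ST}},\, \left(\frac{1}{\sum_{m=1}^M\mu_m g_{\Pi m}} - \frac{1}{h_\Pi}\right)^+\right),
\end{equation*}
matching the formula stated in the lemma. Substituting back defines $\varphi(\Pi) := \log(1+h_\Pi p_\Pi^*) - p_\Pi^* \sum_m \mu_m g_{\Pi m}$, and the outer selection reduces to $\Pi^*(\mv{\alpha})\in\arg\max_\Pi \varphi(\Pi)$, which is exactly the user-selection rule in the lemma. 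Since the joint cdf of the channel gains is continuous and differentiable, ties have zero probability and $\Pi^*$ is almost surely unique. The outer dual problem $\min_{\mu_m\geq 0} g(\{\mu_m\})$ is convex in $\mu$ and can be solved by the ellipsoid method to produce the optimal $\mu_m^*$ to be plugged into the formulas.

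The main obstacle is justifying zero duality gap, since the discrete user-selection variable $\Pi(\mv{\alpha})$ makes Problem \ref{prob:ST LT MAC TDMA} non-convex in the primal variables. My plan is to invoke the time-sharing condition of \cite{Yu06}, exactly as the paper does for Problem \ref{prob:LT ST MAC TDMA}: given any two feasible TDMA policies attaining optimal ergodic sum-rates at two different LT-IPC threshold vectors, a time-shared policy realized by partitioning the continuum of fading realizations according to the cdf achieves any convex combination of the rates at the corresponding convex combination of thresholds, so the capacity viewed as a function of $\{\Gamma_m^{\mathrm{LT}}\}$ is concave. By \cite[Thm.~1]{Yu06}, this concavity implies that the duality gap vanishes despite non-convexity, so the primal optimum is attained by the $(\Pi^*,p_{\Pi^*}^*)$ derived above evaluated at the optimal dual solutions, completing the argument.
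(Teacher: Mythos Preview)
Your proposal is correct and follows essentially the same approach as the paper: the paper states that Problem \ref{prob:ST LT MAC TDMA} ``can be solved similarly as for Problem \ref{prob:LT ST MAC TDMA} by the Lagrange duality method,'' i.e., dualize only the LT-IPC, decompose across fading states, solve the per-state user-and-power selection, minimize the dual via the ellipsoid method, and justify zero duality gap via the time-sharing condition of \cite{Yu06}. Your two-stage inner solution (optimal power by projecting the stationary point onto $[0,P_\Pi^{\mathrm{ST}}]$, then user selection by maximizing the resulting value) makes explicit exactly what the paper leaves to the reader.
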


\subsection{Short-Term Transmit-Power and Interference-Power Constraints}

At last, the ergodic sum capacity under the TDMA constraint, the
ST-TPC, and the ST-IPC can be obtained as the optimal value of the
following problem:
\begin{problem}\label{prob:ST ST MAC TDMA}
\begin{eqnarray}
\mathop{\mathtt{Max.}}_{\Pi(\mv{\alpha}),\{p_k(\mv{\alpha})\}}&&
\mathbb{E}\left[
\log\left(1+h_{\Pi(\mv{\alpha})}p_{\Pi(\mv{\alpha})}(\mv{\alpha})\right)\right]
\nonumber
\\ \mathtt {s.t.} && (\ref{eq:ST TPC MAC TDMA}), (\ref{eq:ST IPC MAC
TDMA}). \nonumber
\end{eqnarray}
\end{problem}
In this case, all the constraints are separable over the fading
states and, thus, this problem is decomposable into independent
subproblems each for one fading state. For brevity, we present the
optimal power-control policy in this case directly as follows:
\begin{lemma}
In the optimal solution of Problem \ref{prob:ST ST MAC TDMA}, the
user $\Pi(\mv{\alpha})$ that transmits at a fading state with
channel realization $\mv{\alpha}$ maximizes the following expression
among all the users (with $\mv{\alpha}$ dropped for brevity):
\begin{equation}
p_{\Pi}^*h_{\Pi}
\end{equation}
where
\begin{equation}
p^*_{\Pi}=\min\left(P_{\Pi}^{\rm ST},
\min_{m\in\{1,\ldots,M\}}\frac{\Gamma_m^{\rm ST}}{g_{\Pi m}}\right).
\end{equation}

\end{lemma}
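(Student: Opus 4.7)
The plan is to exploit the fact that in Problem \ref{prob:ST ST MAC TDMA} every constraint (both the ST-TPC in (\ref{eq:ST TPC MAC TDMA}) and the ST-IPC in (\ref{eq:ST IPC MAC TDMA})) is imposed \emph{pointwise} in $\mv{\alpha}$, with no coupling across fading states. Therefore the expectation in the objective can be maximized state by state, and it suffices to solve, at each realization $\mv{\alpha}$ (dropped for brevity), the finite-dimensional problem
\begin{equation*}
\max_{\Pi\in\{1,\ldots,K\},\,p_\Pi\geq 0}\log(1+h_\Pi p_\Pi)\quad\text{s.t.}\quad p_\Pi\leq P_\Pi^{\rm ST},\ g_{\Pi m}p_\Pi\leq\Gamma_m^{\rm ST},\ \forall m.
\end{equation*}
The first step in the argument is this decomposition: combining it with the TDMA assumption (only $p_{\Pi(\mv{\alpha})}$ is positive) reduces the original functional optimization to an independent pair of discrete/continuous optimizations at each state.

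Next I would handle the inner maximization over $p_\Pi$ for a fixed candidate user $\Pi$. Because $\log(1+h_\Pi p_\Pi)$ is strictly increasing in $p_\Pi$ (as $h_\Pi\geq 0$), the optimizer pushes $p_\Pi$ to the boundary of the feasible interval defined by the $M+1$ upper bounds $P_\Pi^{\rm ST}$ and $\Gamma_m^{\rm ST}/g_{\Pi m}$, $m=1,\ldots,M$. The tightest of these gives
\begin{equation*}
p_\Pi^*=\min\!\Big(P_\Pi^{\rm ST},\,\min_{m\in\{1,\ldots,M\}}\tfrac{\Gamma_m^{\rm ST}}{g_{\Pi m}}\Big),
\end{equation*}
which is exactly the expression in the lemma. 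Note that this step requires no Lagrangian machinery; it is just monotonicity of $\log(1+\cdot)$ on a compact interval.

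Finally, to identify the user $\Pi$ I would substitute $p_\Pi^*$ back into the objective, reducing the outer problem to $\max_\Pi \log(1+h_\Pi p_\Pi^*)$. Since $\log(1+x)$ is strictly monotonically increasing in $x\geq 0$, the $\arg\max$ over $\Pi$ coincides with the $\arg\max$ of $h_\Pi p_\Pi^*$, yielding the stated selection rule. The main (and only mild) obstacle is verifying that replacing the logarithm by its argument does not alter the arg-max: this is immediate from strict monotonicity, but one should note that all $h_\Pi p_\Pi^*$ are nonnegative so no domain issues arise, and that ties occur with probability zero under the continuous joint cdf assumption on $F(\mv{\alpha})$, so the selected $\Pi(\mv{\alpha})$ is almost-surely unique.
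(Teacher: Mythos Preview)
Your proposal is correct and follows exactly the route the paper indicates: the paper notes that ``all the constraints are separable over the fading states and, thus, this problem is decomposable into independent subproblems each for one fading state,'' and then states the lemma directly without further argument. Your write-up simply fills in the (straightforward) per-state optimization that the paper omits for brevity, via the same monotonicity observations.
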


\end{document}